\documentclass[12pt,reqno,twoside]{article}

\usepackage{xcolor}
\usepackage{tikz-cd}
\usepackage{listings}
\usepackage[T1]{fontenc}
\usepackage{verbatim}
\usepackage{multicol}
\setlength{\columnseprule}{0.45pt}
\usepackage[]{amsmath, amssymb, amsthm, tabularx}

\usepackage[]{a4, xcolor, here}

\usepackage[]{pstricks, pst-text, pst-node, pst-tree, gastex}

\usepackage{tikz}

\usepackage[tikz]{bclogo}

\usepackage{mhsetup, mathtools}

\usepackage{boites,graphicx}

\usepackage{soul}

\definecolor{pastelyellow}{rgb}{0.99, 0.99, 0.59}
\definecolor{aqua}{rgb}{0.0, 1.0, 1.0} 
\definecolor{aquamarine}{rgb}{0.5, 1.0, 0.83} 
\definecolor{bananayellow}{rgb}{1.0, 0.88, 0.21}
\definecolor{burgundy}{rgb}{0.5, 0.0, 0.13}
\definecolor{ao(english)}{rgb}{0.0, 0.5, 0.0}

\setul{}{0.2ex}
\setulcolor{bananayellow}

\usepackage[normalem]{ulem}

\usepackage{mathrsfs}

\usepackage{stmaryrd}

\usepackage{enumerate}

\usepackage{paralist}

\usepackage{multienum}

\usepackage{multicol}

\usepackage{fancyhdr}

\usepackage{datetime}

\usepackage{everypage}
\usepackage[contents={},opacity=1,scale=1.6,
color=gray!90]{background}
\usepackage{ifthen}

\usepackage[]{hyperref} 

\hypersetup{
		colorlinks = true,
		linkcolor = red,
		anchorcolor = black,
		citecolor = blue, 
		filecolor = cyan,
		menucolor = red,
		runcolor = cyan,
		urlcolor = blue,
		linkbordercolor = {white},
		linktocpage = true
}

%%%%%%%%%%%%%%%% 
%              %
%   Theorems   %
%              %
%%%%%%%%%%%%%%%%

\newtheorem{theorem}{Theorem}[section]
\newtheorem{proposition}[theorem]{Proposition}

\theoremstyle{definition}

\newtheorem{remark}[theorem]{Remark}

% % % % % % % % % % % % % % % % % % % 

%REMOVE PARENTHESES FROM OPTIONS IN THEOREMS 

\makeatletter
\def\thmhead@plain#1#2#3{%
  \thmname{#1}\thmnumber{\@ifnotempty{#1}{ }\@upn{#2}}%
  \thmnote{ {\the\thm@notefont#3}}}
\let\thmhead\thmhead@plain
\makeatother

% % % % % % % % % % % % % % % % % % %

% % % % % % % % % % % % % % % % % % %

%%%%%%%%%%%%%%%%%%%%%%%%%%%%
%                          %
%   Boldface (varieties)   %
%                          %
%%%%%%%%%%%%%%%%%%%%%%%%%%%%

\newcommand{\0}{\mathrm{0}}

%%%%%%%%%%%%%%%
%             %
%   Mathcal   %
%             %
%%%%%%%%%%%%%%%

\newcommand{\cC}{\mathcal{C}}

\newcommand{\cF}{\mathcal{F}}
\newcommand{\cG}{\mathcal{G}}
\newcommand{\cH}{\mathcal{H}}

\newcommand{\cP}{\mathcal{P}}

\newcommand{\cS}{\mathcal{S}}
\newcommand{\cU}{\mathcal{U}}
\newcommand{\cV}{\mathcal{V}}
\newcommand{\cW}{\mathcal{W}}

\newcommand{\rsp}[1]{{\mathrm{rowsp}{#1}}}

%%%%%%%%%%%%
%          %
%   Hats   %
%          %
%%%%%%%%%%%%

%%%%%%%%%%%%%%%%
%              %
%  online   %
%              %
%%%%%%%%%%%%%%%%

% % % % % % % % % % % % % % % % % % 

%%%%%%%%%%%%%%%%%
%               %
%   D-classes   %
%               %
%%%%%%%%%%%%%%%%%

 %Seroul, page 217
\newcommand{\bmid}{\hbox{$\,$\vrule height 8.5 pt depth 3.5 pt width 0.3pt$\,$}}

% % % % % % % % % % % % % % % % % % 

\newcommand{\bbN}{{\mathbb N}}

\newcommand{\bbF}{{\mathbb F}}

\renewcommand{\geq}{\geqslant}
\renewcommand{\leq}{\leqslant}

 % Card 

% % % % % % % % % % % % % %

%%%%%%%%%%%%%%%%%%%%%%%%%%%%%%%%%%%%%%%%%%%%%%%%%%%%%%%%%%%
%                                                         %
%   Trois environnements pour enumerer des conditions :   %
%   conditions : (1), (2), (3), ...                       %
%   conditionsabc : (a), (b), (c), ...                    %
%   conditionsiii : (i), (ii), (iii), ...                 %
%                                                         %
%%%%%%%%%%%%%%%%%%%%%%%%%%%%%%%%%%%%%%%%%%%%%%%%%%%%%%%%%%%

%%%%%%%%%%%%%%%%%%%%%%%%%%%%%%%%%%%%%%%%%%%%%%%%%%%%%%%%%%%
%
{\end{list}}%

\pagestyle{fancy}

\begin{document}

\renewcommand{\headrulewidth}{0pt}

\rhead{ }
\chead{\scriptsize Optimum Distance Flag Codes from Spreads via Perfect Matchings in Graphs}
\lhead{ }

\title{Optimum Distance Flag Codes from Spreads\\ via Perfect Matchings in Graphs
\renewcommand\thefootnote{\arabic{footnote}}\footnotemark[1] 
}

\author{\renewcommand\thefootnote{\arabic{footnote}}
Clementa Alonso-Gonz\'alez\footnotemark[1],\,  Miguel \'Angel Navarro-P\'erez\footnotemark[1], \\ 
\renewcommand\thefootnote{\arabic{footnote}} 
 Xaro Soler-Escriv\`a\footnotemark[2]}

\footnotetext[1]{Dpt.\ de Matem\`atiques, Universitat d'Alacant, Sant Vicent del Raspeig, Ap.\ Correus 99, E -- 03080 Alacant. \\ E-mail adresses: \texttt{clementa.alonso@ua.es, miguelangel.np@ua.es, xaro.soler@ua.es.}}

\date{\today}

\maketitle

\begin{abstract}
In this paper, we study flag codes on the vector space $\bbF_q^n$, being $q$ a prime power and $\bbF_q$ the finite field of $q$ elements. More precisely, we focus on flag codes that attain the maximum possible distance ({\em optimum distance flag codes})  and can be obtained from a spread of $\bbF_q^n$. We characterize the set of admissible type vectors for this family of flag codes and also provide a construction of them based on well-known results about perfect matchings in graphs. This construction attains both the maximum distance for its type vector and the largest possible cardinality for that distance.
\end{abstract}

\textbf{Keywords:} Network coding, subspace codes, spreads, flag codes, graphs, perfect matching.

\section{Introduction}\label{sec:Introduction}
Network coding is a part of coding theory awaking a lot of interest during the last years. Subspace codes were introduced for the first time in \cite{KoetKschi08} as adequate error-correction codes in random network coding, that is, when the network is non-coherent. These codes consist of families of subspaces of a given $n$-dimensional vector space over a finite field $\mathbb{F}_q$ endowed with a specific distance. If the dimension of all the subspaces is fixed, we say that the code is a {\em constant dimension code}. Due to their good properties, {\em spread codes} or just {\em spreads} have become one of the most important families of constant dimension codes (\cite{MangaGorlaRosen08, MangaTraut14}). Spreads are  objects widely studied in classical finite geometry without paying attention to their application to coding theory (see \cite{Segre64}).

In \cite{KoetKschi08}, the authors proposed a suitable single-source multicast network channel that is used only once, so that subspace codes can be considered as {\em one-shot codes}.  Using the channel more than once was suggested originally in \cite{NobUcho09} as an option when the field size $q$ and the packet size $n$ could not be increased. This gives rise to the so-called {\em multishot codes} or {\em multishot constant dimension codes} if the dimension at each shot is constant. In a multishot constant dimension code $\cC$, codewords consist of sequences of $r$ subspaces of $\mathbb{F}_q^n$ with respective fixed dimensions $t_1,\dots,t_r$, sent in $r$ successive shots. In particular, if $0<t_1<\dots<t_r<n$ and these subspaces are nested, we have \emph{flags} on $\mathbb{F}_q^n$. In that case, we say that $\cC$ is a \emph{flag code of type} $(t_1,\dots,t_r)$ on $\mathbb{F}_q^n$. A channel model for flag codes in the setting of network coding was also described in \cite{LiebNebeVaz18}.

In this paper we focus on the construction of flag codes of type $(t_1,\dots,t_r)$ on $\mathbb{F}_q^n$ with the maximum possible distance ({\em optimum distance flag codes}) such that the subspace code sent at some shot is a $k$-spread, being $k$ a divisor of $n$. We show that this construction is not possible in general and we conclude that an {\em admissible type vector} $(t_1,\dots,t_r)$ (for $k$) has to satisfy the following: $k \in \{t_1,\dots,t_r\} \subseteq \{1,\dots,k,n-k,\dots,n-1\}$, that is, the dimension $k$ must appear in the type vector whereas no dimension between $k+1$ and $n-k-1$ is allowed. In \cite{CasoPlanar}, it was proved that optimum distance full flag codes (type $(1,\dots,n-1)$) such that the subspace code sent at some shot is a $k$-spread can be constructed just if $n=2k$ or $n=3$ and $k=1$. In these two situations any full type vector is admissible. For $n=2k$, in \cite{CasoPlanar} it is also described a concrete construction of optimum distance full flag codes from planar spreads using the fact that $k=n-k$. However, this equality does not hold for a general admissible type vector and we need to overcome the gap between dimensions $k$ and $n-k$. We solve this problem by using the existence of perfect matchings in a specific bipartite regular graph. We give our construction gradually, starting with the type $(1, n-1)$ (this includes the full type vector for the case $n=3, k=1$), following with the type $(k, n-k)$, to finish with the full admissible type $(1,\dots, k,n-k,\dots, n-1)$. Our construction provides codes with the maximum distance for the given admissible type vector and attains the largest cardinality among optimum distance flag codes of its type.

The paper is organized as follows: In Section \ref{sec:Preliminaries} we recall some background on finite fields, constant dimension codes, flag codes and graphs. In Section \ref{sec: optimum distance flag codes from spreads} we characterize first the admissible type vectors to have a flag code with the maximum possible distance and a spread as the subspace code used at some shot. Then, we undertake the construction of our codes in several stages: we consider first the type $(1, n-1)$ and construct optimum distance flag codes from the spread of lines, exploiting the existence of perfect matchings in bipartite regular graphs. Then, using the field reduction map, we translate the previous construction to the type $(k, n-k)$. Finally, by taking advantage of some properties satisfied by the mentioned map, we finish with the full admissible type $(1,\dots, k,n-k,\dots, n-1)$ and any other admissible type vector. We show that our codes have the best size for the given admissible type vector and the associated maximum distance.  We complete this section with an example of our construction to obtain an optimum distance flag code of type $(2,4)$ on $\bbF_2^6$ having a $2$-spread as the subspace code used at the first shot.

\section{Preliminaries}\label{sec:Preliminaries}

We devote this section to recall some background  we will need along this paper. This background involves finite fields, subspace and flag codes and graph theory.  

\subsection{Results on finite fields}\label{subsec:results on finite fields}

Most of the following definitions and results about finite fields as well as the corresponding proofs can be found in \cite{LidNiede94}. 

Let $q$ be a prime power and $\bbF_q$ the finite field with $q$ elements. Consider $f(x) \in \bbF_q[x]$ a monic irreducible polynomial of degree $k$ and $\alpha \in \bbF_{q^k}$ a root of $f(x)$. Then we have that $\bbF_{q^k} \cong \bbF_q[\alpha]$, which allows us to realize the field $\bbF_{q^k}$ as $\bbF_q[\alpha]$. If $f(x)=x^k+\sum_{i=0}^{k-1}a_ix^i \in \bbF_q[x]$, the following square matrix
$$P=
\begin{pmatrix}
 0       & 1       &  0       &  \dots  & 0        \\
  0      & 0       &  1       &  \dots  & 0        \\
  \vdots & \vdots  &  \vdots  &  \ddots & \vdots   \\
   0     & 0       &  0       &  \dots  & 1        \\
  -a_0   & -a_1    &  -a_2    &  \dots  & -a_{k-1}
\end{pmatrix}
$$
is called the \emph{companion matrix} of $f(x)$ and satisfies that 
$\bbF_q[\alpha] \cong \bbF_q[P]$. Then, $ \bbF_q[P]$ is a field with $q^k$ elements. We also have the natural field isomorphism
\begin{equation}\label{eq:field isomorphism}
 \phi:\bbF_{q^k}\rightarrow \bbF_q[P], \qquad \sum_{i=0}^{k-1}v_i\alpha^i \mapsto \sum_{i=0}^{k-1}v_i P^i. 
\end{equation}

For any positive integer $n$, we denote by $\cP_q(n)$ the \emph{projective geometry} of $\bbF_q^n$, that is, the set of all vector subspaces of $\bbF_q^n$. The \emph{Grassmann variety} $\cG_q(k,n)$ is the set of all $k$-dimensional subspaces of $\bbF_q^n$. Any subspace $\cU \in \cG_q(k,n)$ can be generated by the rows of some full-rank matrix $U \in \bbF_q^{k\times n}$. In that case, we write $\cU=\rsp(U)$ and say that $U$ is a \emph{generator matrix} of $\cU$. By taking the generator matrix in reduced row echelon form (RREF) we get uniqueness in the matrix representation of the subspace $\cU$.

Let us take $n=ks$ with $k>1$. The field isomorphism $\phi$ provided by (\ref{eq:field isomorphism}), in turn, naturally induces a map $\varphi$ between  $\cP_{q^k}(s)$ and $\cP_q(ks)$ given by:

\begin{equation}\label{eq: field reduction on m-subspaces} 
\begin{array}{rccc}
{\varphi}: & \qquad \cP_{q^k}(s)                             & \longrightarrow &  \cP_q(ks)\\
 &      &   &  \\
              & \mathrm{rowsp}\begin{pmatrix}
                  x_{11}&\dots&x_{1s}\\
                  \vdots &\ddots& \vdots\\
                  x_{m1}&\dots&x_{ms}
                  \end{pmatrix}               & \longmapsto     & \mathrm{rowsp}\begin{pmatrix}
                  \phi(x_{11})&\dots&\phi(x_{1s})\\
                  \vdots &\ddots& \vdots\\
                  \phi(x_{m1})&\dots&\phi(x_{ms})
                                                                        \end{pmatrix}.
              \end{array}
\end{equation}
This map is known as \emph{field reduction} since it maps subspaces over $\bbF_{q^k}$ into subspaces over the subfield $\bbF_q$ (see \cite{GoManRo12, LavVoorde15, MangaGorlaRosen08, MangaTraut14}). Let us recall some useful properties of the map $\varphi$ pointed out in \cite{LavVoorde15} that we will use in Section \ref{subsect:(k,n-k) type}.

\begin{proposition}\label{prop: field reduction}
The map $\varphi$ defined in (\ref{eq: field reduction on m-subspaces}) satisfies the following:
\begin{enumerate}
\item $\varphi$ is injective.
\item For any pair of subspaces  $\cU, \cV $ of $\mathbb{F}_{q^k}^s$, we have $\varphi(\cU \cap \cV)=\varphi(\cU) \cap \varphi(\cV)$.
\item Given $\cU, \cV$ subspaces of $ \bbF_{q^k}^s$ with $\cU\subseteq \cV$, then $\varphi(\cU) \subseteq \varphi(\cV)$.
\item For any $m\in\{1,\dots,s-1\}$, it holds that $\varphi(\cG_{q^k}(m,s))\subseteq \cG_{q}(mk,sk)$.

\end{enumerate}
\end{proposition}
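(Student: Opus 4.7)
The plan is to identify the field reduction map $\varphi$ with the natural $\bbF_q$-linear isomorphism
$$\Phi: \bbF_{q^k}^s \longrightarrow \bbF_q^{sk}$$
obtained by expanding each coordinate of a vector in the $\bbF_q$-basis $\{1,\alpha,\dots,\alpha^{k-1}\}$ of $\bbF_{q^k}$. Once the equality $\varphi(\cU)=\Phi(\cU)$ is established for every $\cU\in\cP_{q^k}(s)$, all four claims become immediate consequences of $\Phi$ being an $\bbF_q$-linear bijection.

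The bridge is a direct reading of the companion matrix. Letting $e_j\in\bbF_q^k$ denote the $j$-th standard row vector, one checks from the definition of $P$ that $e_j P = e_{j+1}$ for $j<k$ and $e_k P = (-a_0,\dots,-a_{k-1})$, which under the identification $e_j\leftrightarrow\alpha^{j-1}$ is exactly the effect of multiplying by $\alpha$ in $\bbF_{q^k}$ (since $\alpha^k=-\sum a_i\alpha^i$). Because $\phi(\alpha^i)=P^i$ and $\phi$ is $\bbF_q$-linear, for every $x\in\bbF_{q^k}$ the $j$-th row of $\phi(x)\in\bbF_q^{k\times k}$ equals the coordinate vector of $\alpha^{j-1}x$ in the basis $\{1,\alpha,\dots,\alpha^{k-1}\}$.

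Applying this blockwise, for any row $u=(x_1,\dots,x_s)\in\bbF_{q^k}^s$ the $j$-th row of $[\phi(x_1)\,|\,\cdots\,|\,\phi(x_s)]$ is precisely $\Phi(\alpha^{j-1}u)$. Hence, if $U\in\bbF_{q^k}^{m\times s}$ is a generator matrix of $\cU$ with rows $u_1,\dots,u_m$, the row space of $\phi(U)$ equals
$$\mathrm{span}_{\bbF_q}\{\Phi(\alpha^{j-1} u_i) : 1\leq i\leq m,\ 1\leq j\leq k\}=\Phi(\cU),$$
the last equality holding because $\cU=\mathrm{span}_{\bbF_{q^k}}\{u_1,\dots,u_m\}$ and $\Phi$ is $\bbF_q$-linear. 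This simultaneously shows that $\varphi$ is well-defined (independent of the chosen generator matrix) and that, as a set-theoretic map on subspaces, $\varphi$ coincides with $\Phi$.

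Given $\varphi(\cU)=\Phi(\cU)$, the four claims fall out. Item (1) is injectivity of $\Phi$. Item (2) follows from the fact that any bijection commutes with intersections. Item (3) is the monotonicity of the image under any map. Item (4) uses $\dim_{\bbF_q}\cU=k\,\dim_{\bbF_{q^k}}\cU=mk$ together with the fact that the $\bbF_q$-linear isomorphism $\Phi$ preserves $\bbF_q$-dimension. The only step of substance is the companion-matrix computation that identifies the rows of $\phi(x)$ with the coordinate expansions of $x,\alpha x,\dots,\alpha^{k-1}x$; everything after that is formal and requires no argument beyond what a set bijection already supplies.
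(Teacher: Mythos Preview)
Your argument is correct. The identification of $\varphi$ with the $\bbF_q$-linear isomorphism $\Phi$ via the companion-matrix computation is exactly the right mechanism, and the verification that the $j$-th row of $\phi(x)$ encodes $\alpha^{j-1}x$ is accurate: right multiplication by $P$ on row vectors realises multiplication by $\alpha$ under the basis identification $e_j\leftrightarrow\alpha^{j-1}$, so $e_j\phi(x)=e_j\sum_i v_iP^i$ corresponds to $\alpha^{j-1}\sum_i v_i\alpha^i=\alpha^{j-1}x$. The passage from ``rows of $\phi(U)$'' to $\Phi(\cU)$ via $\mathrm{span}_{\bbF_q}\{\alpha^{j-1}u_i\}=\mathrm{span}_{\bbF_{q^k}}\{u_i\}$ is clean and also settles well-definedness, which the paper's definition (\ref{eq: field reduction on m-subspaces}) leaves implicit.

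As for comparison: the paper does not actually prove Proposition~\ref{prop: field reduction}. It states the properties and refers the reader to \cite{LavVoorde15} for justification. So you have supplied a self-contained proof where the paper gives only a citation; there is no ``paper's own proof'' to compare against. Your approach is in fact the standard one underlying the field-reduction literature, and it has the added virtue of making the well-definedness of $\varphi$ explicit.
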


\subsection{Constant dimension codes}\label{subsec:constant dimension codes}

The Grassmannian $\cG_q(k,n)$ can be considered as a metric space with the \emph{subspace distance} defined as 
\begin{equation}\label{def: subspace distance}
d_S(\cU, \cV)= \dim(\cU+\cV)-\dim(\cU\cap\cV)=2(k-\dim(\cU\cap\cV)),
\end{equation}
for all $\cU, \cV \in \cG_q(k,n)$ (see \cite{KoetKschi08}).

A \emph{constant dimension (subspace) code} of dimension $k$ and length $n$ is any nonempty subset $\cC \subseteq \cG_q(k,n)$. The \emph{minimum subspace distance} of the code $\cC$ is defined as 
$$
d_S(\cC)=\min\{ d_S(\cU, \cV) \ | \ \cU, \cV \in \cC, \ \cU \neq \cV \}
$$
(see \cite{TrautRosen18} and references therein, for instance). It follows that the minimum distance of a constant dimension code $\cC$ is upper-bounded by
\begin{equation}\label{eq: bound subspace distance}
d_S(\cC)\leq
\left\lbrace
\begin{array}{lll}
2k      & \text{if} & 2k\leq n, \\
2(n-k)  & \text{if} & 2k > n.
\end{array}
\right.
\end{equation}
 Constant dimension codes $\cC\subseteq \cG_q(k,n)$ in which the distance between any pair of different codewords is $d_S(\cC)$ are said to be \emph{equidistant}. For such codes, there exists some value $c<k$ such that, given two different subspaces $\cU, \cV \in \cC$, it holds that $\dim(\cU\cap\cV)=c$. Hence, the minimum distance of the code is precisely $d_S(\cC)=2(k-c)$, and $\cC$ is also called an \emph{equidistant $c$-intersecting} constant dimension code. In case the value $c$ is the minimum possible dimension of the intersection between $k$-dimensional subspaces of $\bbF_q^n$, that is, 
$$
c=\left\lbrace
\begin{array}{lll}
0     & \text{if} & 2k\leq n, \\
2k-n  & \text{if} & 2k > n,
\end{array}
\right.
$$ equidistant $c$-intersecting codes attain the bound given in (\ref{eq: bound subspace distance}). In particular, for dimensions $k \leq \lfloor\frac{n}{2}\rfloor$, we have that these codes are $0$-intersecting codes known as \emph{partial spreads}. The cardinality of any partial spread $\cC$ in $\cG_q(k,n)$ is upper bounded by
\begin{equation}\label{eq: bound partial spreads}
|\cC| \leq \left\lfloor \frac{q^n-1}{q^k-1}  \right\rfloor.
\end{equation}
Whenever $k$ divides $n$, this bound is attained by the so-called \emph{spread codes} (or $k$-\emph{spreads}) of $\bbF_q^n$.
Notice that a $k$-spread $\cS$ is a subset of $\cG_q(k,n)$ whose elements give a partition of $\bbF_q^n$. Spreads are classical objects coming from Finite Geometry (see \cite{Segre64}, for instance). For further information related to spreads in the network coding framework, we refer the reader to \cite{GoManRo12, MangaGorlaRosen08, MangaTraut14,TrautRosen18}.

The following spread is due to Segre \cite{Segre64}. In the network coding setting, it was presented for the first time in \cite{MangaGorlaRosen08} as a construction of spread code. Denote by $GL_k(q)$ the general linear group of degree $k$ over the field $\bbF_q$. Let $P \in  GL_k(q)$ be the companion matrix of a monic irreducible polynomial in $\bbF_q[x]$. We will write $I_k$ and $\0_k$ to denote the identity matrix and the zero matrix of size $k\times k,$ respectively. Take $s \in \bbN$ such that $n=sk$. Then, the following family of $k$-dimensional subspaces is a spread code:
\begin{equation}\label{eq:ManganielloSpreadDefinition}
\cS(s,k,P)=\{\rsp(S)\mid S\in \Sigma \} \subseteq \cG_q(k,n),
\end{equation}
where $\Sigma$ is the set of $k\times ks$ matrices

\begin{equation}\label{eq:matrices for the spread}
 \Sigma=\{( A_1|A_2|\dots|A_s)\,\big|\,  A_i \in \bbF_q[P]\}    
\end{equation}
with the first non-zero block from the left equal to $I_k$.

\begin{remark}\label{rem:bijection lines and spread}
Notice that the matrices in $\Sigma$ are in reduced row echelon form and it is clear that the field reduction map $\varphi$ defined in (\ref{eq: field reduction on m-subspaces}) gives a bijection between the Grassmannian of lines $\cG_{q^k}(1,s)$ and the spread code $\cS(s,k,P)$
\begin{equation}\label{eq: bijection spread lines and spread} 
\begin{array}{rccc}
{\varphi}{\bmid_{\cG_{q^k}(1,s)}}: & \qquad \cG_{q^k}(1,s)                            & \longrightarrow & \cS(s,k,P)  \\
 &      &   &  \\
              & \mathrm{rowsp}\begin{pmatrix}
                  x_{11},\dots,x_{1s}
        
                  \end{pmatrix}               & \longmapsto     & \mathrm{rowsp}\begin{pmatrix}
                  \phi(x_{11})|\dots|\phi(x_{1s})
                                                                        \end{pmatrix}.
              \end{array}
\end{equation} We will come back to this fact in Section \ref{subsect:(k,n-k) type}.

\end{remark}
Given a constant dimension code $\cC \subseteq \cG_q(k,n)$, the \emph{dual code} of $\cC$ is the subset of $ \cG_q(n-k,n)$ given by 
$$
\cC^\perp =\{ \cU^\perp \ | \ \cU \in \cC\},
$$where $\cU^\perp$ is the orthogonal of $\cU$ with respect to the usual inner product in $\bbF_q^n$. In \cite{KoetKschi08}, it was proved that $\cC$ and $\cC^\perp$ have both the same cardinality and minimum distance. Notice that the dual of a partial spread of dimension $k\leq \lfloor \frac{n}{2}\rfloor$ is an equidistant $(n-2k)$-intersecting code of dimension $n-k$ and conversely.

\subsection{Flag codes}
Subspace codes were introduced for the first time in \cite{KoetKschi08} as error-correction codes in random network coding. In that paper, the authors propose a suitable network channel with a single transmitter and several receivers that is used just once, so that subspace codes can be considered as {\em one-shot} codes. The use of the channel more than once was suggested originally in \cite{NobUcho09} and gave rise to the so-called multishot codes as a generalization of subspace codes. We call \emph{multishot code} of length $r\geq 1$, or \emph{$r$-shot code}, to any nonempty subset $\cC$ of $\cP_q(n)^r$.  In particular, if codewords in $\cC$  are sequences of nested subspaces, we say that $\cC$ is a \emph{flag code}. Flag codes were first studied as orbits of group actions in \cite{LiebNebeVaz18} and, in \cite{Kurz20}, the reader can find a study of bounds on the cardinality of full flag codes with a prescribed distance. Let us recall some concepts in the setting of flag codes.

A {\em flag} of type $(t_1, \dots, t_r)$, with $0<t_1<\dots <t_r<n$, on the vector space $\mathbb{F}_q^n$ is a sequence of subspaces $\mathcal{F}=(\mathcal{F}_1,\dots,  \mathcal{F}_r)$ in $\mathcal{G}_q(t_1,n) \times \dots \times \mathcal{G}_q(t_r,n) \subseteq \mathcal{P}_q(n)^r$ such that
$$
\{0\}\subsetneq \mathcal{F}_1 \subsetneq \dots \subsetneq \mathcal{F}_r \subsetneq \mathbb{F}_q^n.
$$
With this notation, $\mathcal{F}_i$ is said to be the {\em $i$-th subspace} of $\cF$. In case the type vector is $(1, 2, \dots, n-1),$ we say that ${\cF}$ is a {\em full flag}.

The space of flags of type $(t_1, \dots, t_r)$ on $\mathbb{F}_q^n$ is denoted by $\mathcal{F}_q((t_1,...,t_r),n)$ and can be endowed with the \emph{flag distance} $d_f$ that naturally extends the subspace distance defined in (\ref{def: subspace distance}):  given two flags $\cF=(\mathcal{F}_1,\dots,  \mathcal{F}_r)$ and $\cF'=(\mathcal{F}'_1,\dots,  \mathcal{F}'_r)$ in $\mathcal{F}_q( (t_1, \dots, t_r),n)$, the flag distance between them is
$$
d_f(\cF,\cF')= \sum_{i=1}^r d_S(\mathcal{F}_i, \mathcal{F}'_i).
$$

A \emph{flag code} of type $(t_1, \dots, t_r)$ on $\bbF_q^n$ is defined as any non-empty subset $\cC\subseteq \cF_q((t_1, \dots, t_r), n)$. The {\em minimum distance} of a flag code $\cC$ of type $(t_1, \dots, t_r)$ on $\bbF_q^n$ is given by
$$
d_f(\cC)=\min\{d_f(\cF,\cF')\ |\ \cF,\cF'\in\cC, \ \cF\neq \cF'\}.
$$
Given a type vector $(t_1, \dots, t_r)$, for every $i=1, \dots, r$, we define the \emph{$i$-projection} to be the map 
\begin{equation}\label{eq: i-projection} 
\begin{array}{rccc}
{p_i}: & \cF_q((t_1, \dots, t_r), n)  & \longrightarrow & \mathcal{G}_q(t_i,n)  \\
& & &\\
 &  \cF=(\cF_1,\dots ,\cF_r)   & \longmapsto   & p_i(\cF)= \cF_i.
\end{array}
\end{equation}

The \emph{$i$-projected code} of $\cC$ is the set $\mathcal{C}_i=\{p_i(\cF)\,| \,\cF \in \cC\}$. By definition, this code $\cC_i$ is a constant dimension code in the Grassmannian $\cG_q(t_i, n)$ and its cardinality satisfies $\vert \cC_i\vert \leq \vert \cC \vert$. We say that  $\cC$ is a \emph{disjoint} flag code if  $|\cC_1|=\dots=|\cC_r|=|\cC|$, that is, the $i$-projection $p_i$ is injective for any $i \in\{1, \dots, r\}$.

The distance of a flag code $\cC$ of type $(t_1, \dots, t_r)$ is upper bounded by
\begin{equation}\label{eq:quotamaxdistflag}
 d_f(\cC) \leq 2 \left( \sum_{t_i\leq \lfloor \frac{n}{2} \rfloor} t_i + \sum_{t_i > \lfloor \frac{n}{2} \rfloor} (n-t_i) \right).
\end{equation}
In particular, if $\cC$ is a full flag code, we have that  (\ref{eq:quotamaxdistflag}) becomes 
\[
d_f(\mathcal{C})\leq 
\left\lbrace
\begin{array}{lccc}
\dfrac{n^2}{2}, & \text{for} & n & \text{even}, \\ 
                &            &   &         \\
\dfrac{n^2-1}{2}, & \text{for} & n & \text{odd}.
\end{array}
\right.
\]

\subsection{Matchings in graphs}\label{subsec: Matchings in bipartite graphs}
Now we introduce some basic concepts and results on graphs in order to use them in the construction of a specific family of flag codes with the maximum distance in Section \ref{sec: optimum distance flag codes from spreads}. All these definitions and results together with their proofs can be found in \cite{Diestel2005}.

A \emph{graph} $G=(V,E)$ consists of a \emph{vertex set} $V$ and an \emph{edge set} $E\subset V \times V$ where an edge is an unordered pair of vertices. Two vertices $v,v'\in V$  are \emph{adjacent} if $(v,v') \in E$. Also,  we say that $(v,v')$ is an \emph{incident} edge with $v$ and $v'$. Two edges are \textit{adjacent} if they have a common vertex. Given a vertex $v \in V$ we call the \emph{degree} of $v$ to the number of incident edges with $v$. A graph $G$ is said to be \emph{$k$-regular}, if each vertex in $G$ has degree $k$.

On the other hand, a set of vertices (or edges) is \textit{independent} if it does not contain adjacent elements. A set $M\subseteq E$ of independent edges of a graph $G=(V,E)$ is called a \emph{matching}. A matching $M$ matches $S \subseteq V$ if every vertex in $S$ is incident with an edge in $M$ and $M$ is \emph{perfect} if it matches $V$.

A graph $G$ is \emph{bipartite} if the vertex set can be partitioned into two sets $V=A\cup B$ such that there is no pair of adjacent vertices neither in $A$ nor in $B$. For this class of graphs, perfect matchings are just bijections between $A$ and $B$ given by a subset of edges of the graph connecting each vertex in $A$ with another vertex in $B$.
The following classic result whose proof can be found in \cite{Diestel2005} (pages $37-38$) states the existence of perfect matchings in a family of graphs:

\begin{theorem}\label{theo:perfect matching}
Any $k$-regular bipartite graph admits a perfect matching.
\end{theorem}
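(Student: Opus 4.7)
The plan is to deduce the result from Hall's marriage theorem, which asserts that a bipartite graph $G=(A\cup B,E)$ admits a matching saturating $A$ if and only if every $S\subseteq A$ satisfies $|N(S)|\geq |S|$, where $N(S)\subseteq B$ denotes the set of neighbors of vertices in $S$. Since Hall's theorem is standard (see \cite{Diestel2005}), the work reduces to two short verifications: that $|A|=|B|$ in our setting, and that Hall's condition holds.

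First I would observe that the bipartition $V=A\cup B$ of a $k$-regular bipartite graph must satisfy $|A|=|B|$. Indeed, counting the edges of $G$ by summing degrees on each side gives $|E|=k|A|=k|B|$, so $|A|=|B|$ (here one uses $k\geq 1$; the case $k=0$ is trivial since the empty matching vacuously matches the empty vertex set of any $0$-regular graph with no vertices, and if vertices are present with $k=0$ the statement is uninteresting, so we may assume $k\geq 1$).

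Next I would verify Hall's condition by a double-counting argument on the edges between a set $S\subseteq A$ and its neighborhood $N(S)\subseteq B$. On the one hand, every edge incident with a vertex of $S$ has its other endpoint in $N(S)$, so the number of such edges equals $k|S|$ since each vertex of $S$ has degree exactly $k$. On the other hand, each vertex of $N(S)$ has degree $k$ in $G$, so at most $k|N(S)|$ edges of $G$ are incident with $N(S)$. Combining these two observations yields
\[
k|S| \leq k|N(S)|,
\]
and dividing by $k>0$ gives $|S|\leq |N(S)|$, which is precisely Hall's condition.

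Applying Hall's theorem then produces a matching $M$ saturating every vertex of $A$. Since $|A|=|B|$ and $M$ matches each vertex of $A$ to a distinct vertex of $B$, the matching $M$ saturates $B$ as well, hence is a perfect matching of $G$. There is no genuine obstacle in this argument; the only subtlety is ensuring the double count in the Hall-condition step accounts for all edges incident to $S$, which is immediate because every neighbor of $S$ lies by definition in $N(S)$.
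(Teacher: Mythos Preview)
Your proof is correct. The paper itself does not supply a proof of this theorem; it simply cites \cite{Diestel2005} (pages 37--38), where the argument is exactly the Hall's-theorem-plus-double-counting approach you have written out, so your proposal is essentially the same as the reference the paper defers to.
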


This theorem will be used through Section \ref{sec: optimum distance flag codes from spreads} to give perfect matchings of a particular regular bipartite graph of our interest. Such matchings will allow us to construct disjoint flag codes of a specific type as we will show later.

\section{Optimum distance flag codes from spreads}\label{sec: optimum distance flag codes from spreads}

Flag codes attaining the bound in (\ref{eq:quotamaxdistflag}) are called \emph{optimum distance flag codes} and can be characterized in terms of their projected codes in the following way: 

\begin{theorem}(see \cite{CasoPlanar})\label{theo:characterization optimum distance}
Let $\cC$ be a flag code of type $(t_1, \dots, t_r)$. The following statements are equivalent:
\begin{enumerate}
\item[(i)] $\cC$ is an optimum distance flag code.
\item[(ii)] $\cC$ is disjoint and every projected code $\cC_i$ attains the maximum possible subs\-pace distance.
\end{enumerate}
\end{theorem}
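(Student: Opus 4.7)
The plan is to observe that the upper bound in (\ref{eq:quotamaxdistflag}) is simply the sum, over $i=1,\dots,r$, of the per-Grassmannian maximum subspace distances provided by (\ref{eq: bound subspace distance}). Once this is noted, the equivalence reduces to a term-by-term comparison, and the whole argument is a double-inequality bookkeeping exercise.

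For the direction (ii)$\Rightarrow$(i), I would argue as follows. Assume $\cC$ is disjoint, so every projection $p_i$ of (\ref{eq: i-projection}) is injective, and assume $d_S(\cC_i)$ equals the upper bound in (\ref{eq: bound subspace distance}) for every $i$. Take any two different flags $\cF,\cF'\in\cC$. By injectivity of each $p_i$, the subspaces $\cF_i$ and $\cF'_i$ are distinct elements of $\cC_i$, so
$$
d_S(\cF_i,\cF'_i)\geq d_S(\cC_i)
=2\min\{t_i,n-t_i\}
$$
for every $i$. Summing on $i$ gives $d_f(\cF,\cF')$ at least the right-hand side of (\ref{eq:quotamaxdistflag}). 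Taking the minimum over all pairs yields $d_f(\cC)$ at least that bound, and combined with (\ref{eq:quotamaxdistflag}) itself we conclude equality, i.e.\ $\cC$ is an optimum distance flag code.

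For (i)$\Rightarrow$(ii) I would argue by contraposition: if either disjointness fails or some $\cC_i$ falls short of the bound (\ref{eq: bound subspace distance}), then $\cC$ cannot attain (\ref{eq:quotamaxdistflag}). If $\cC$ is not disjoint, some $p_i$ is not injective, so there are distinct $\cF,\cF'\in\cC$ with $\cF_i=\cF'_i$, contributing $0$ to $d_f(\cF,\cF')$; since every other term $d_S(\cF_j,\cF'_j)$ is trivially bounded above by $2\min\{t_j,n-t_j\}$, the sum $d_f(\cF,\cF')$ is strictly less than the bound in (\ref{eq:quotamaxdistflag}), and hence so is $d_f(\cC)$. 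If instead some projected code $\cC_i$ has $d_S(\cC_i)<2\min\{t_i,n-t_i\}$, pick $\cU\neq \cV$ in $\cC_i$ realizing $d_S(\cC_i)$ and lift them to flags $\cF,\cF'\in\cC$ with $p_i(\cF)=\cU$, $p_i(\cF')=\cV$; then $\cF\neq\cF'$ and the $i$-th term of $d_f(\cF,\cF')$ is strictly below its maximum while the other terms cannot exceed theirs, so again $d_f(\cC)\leq d_f(\cF,\cF')$ is strictly below the bound.

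There is no real obstacle here; the only thing one must be careful about is checking that the upper bound (\ref{eq:quotamaxdistflag}) is additive across the projections, which is immediate from (\ref{eq: bound subspace distance}) and the definition of $d_f$ as a sum of subspace distances. All the rest is a direct argument that uses only the definitions of disjoint flag code, $i$-projected code, and minimum distance.
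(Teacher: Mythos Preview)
Your argument is correct. The two implications are handled cleanly: for (ii)$\Rightarrow$(i) you use disjointness to ensure every coordinate contributes at least its maximal subspace distance, and for (i)$\Rightarrow$(ii) your contraposition identifies a witnessing pair of flags whose flag distance falls strictly below the bound whenever one of the two hypotheses fails. The only tiny imprecision is in the second branch of the contraposition: when you pick $\cU\neq\cV$ in $\cC_i$ realizing $d_S(\cC_i)$, you are implicitly assuming $|\cC_i|\geq 2$. This is harmless, since if $|\cC_i|=1$ (with $|\cC|\geq 2$) then disjointness already fails and your first branch applies; you might want to phrase the two branches as ``not disjoint'' versus ``disjoint but some $\cC_i$ misses the bound'' to make this explicit.

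As for comparison with the paper: the paper does not give its own proof of this statement. It is quoted from \cite{CasoPlanar} and used as a black box throughout Section~\ref{sec: optimum distance flag codes from spreads}. Your write-up therefore supplies what the present paper omits; the argument you give is the natural one and is essentially what one would expect the cited reference to contain.
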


As a consequence, the $i$-projected codes of an optimum distance flag code have to be partial spreads if $t_i \leq \lfloor \frac{n}{2} \rfloor$ and equidistant $(2t_i-n)$-intersecting subspace codes for dimensions $t_i> \lfloor \frac{n}{2} \rfloor$.

As mentioned in Section \ref{subsec:constant dimension codes}, whenever $k$ divides $n$, $k$-spread codes are partial spread codes (constant dimension codes with maximum distance) with the best size. This good property of spreads naturally gives rise to the question of fin\-ding  optimum distance flag codes having a spread as their $i$-projected code when the dimension $t_i$ is a divisor of $n$. Note that, due to the disjointness property, we could have at most one spread among the projected codes. In \cite{CasoPlanar} it was proved that having a spread as a projected code makes optimum distance flag codes attain the maximum possible size as well.

\begin{theorem}\cite[Theorem 3.12]{CasoPlanar}\label{theo: maximum cardinality iff spread}
Let $k$ be a divisor of $n$ and assume that $\cC$ is an optimum distance flag code of type $(t_1, \dots, t_r)$ on $\bbF_q^n$. If some $t_i=k$, then $|\cC|\leq \frac{q^n-1}{q^k-1}$ and equality holds if, and only if, the $i$-projected code $\cC_i$ is a $k$-spread of $\bbF_q^n$. 
\end{theorem}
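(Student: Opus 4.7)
The plan is to reduce the statement to the already-recalled facts about partial spreads by invoking the characterization of optimum distance flag codes given in Theorem \ref{theo:characterization optimum distance}. Since $\cC$ is an optimum distance flag code, that theorem tells us two things simultaneously: first, $\cC$ is disjoint, so that $|\cC|=|\cC_i|$ for every $i$; second, every projected code $\cC_j$ attains the maximum possible subspace distance. I would use exactly these two facts with $j=i$.

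Since $k$ is a divisor of $n$ and $k<n$, we have $k\leq \lfloor n/2\rfloor$, so the bound (\ref{eq: bound subspace distance}) gives $d_S(\cC_i)\leq 2k$ and equality forces $\cC_i$ to be $0$-intersecting, hence a partial spread in $\cG_q(k,n)$. The partial spread bound (\ref{eq: bound partial spreads}), combined with the fact that $q^k-1$ divides $q^n-1$ when $k\mid n$, then yields
\[
|\cC|=|\cC_i|\leq \left\lfloor\frac{q^n-1}{q^k-1}\right\rfloor=\frac{q^n-1}{q^k-1},
\]
which is the desired inequality.

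For the equality case, the reverse implication is immediate: if $\cC_i$ is a $k$-spread of $\bbF_q^n$, then $\cC_i$ has cardinality $\frac{q^n-1}{q^k-1}$ and, by disjointness, so does $\cC$. For the direct implication, suppose $|\cC|=\frac{q^n-1}{q^k-1}$; then $\cC_i$ is a partial spread in $\cG_q(k,n)$ of this maximal size. A simple counting argument finishes the job: the subspaces in $\cC_i$ are pairwise trivially intersecting $k$-dimensional subspaces, so they cover exactly $|\cC_i|(q^k-1)+1=(q^n-1)+1=q^n$ vectors of $\bbF_q^n$, which forces them to partition $\bbF_q^n$ and hence to form a $k$-spread.

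I do not expect any real obstacle here: the statement is essentially a repackaging of the partial spread bound together with the characterization in Theorem \ref{theo:characterization optimum distance}. The only point requiring a little care is the equality case, where one must remember that a partial spread of maximum size in $\cG_q(k,n)$ is automatically a spread (by the above covering count), so the two notions coincide exactly at the bound.
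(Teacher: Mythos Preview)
Your argument is correct. Note, however, that the paper does not prove this theorem at all: it is quoted verbatim from \cite[Theorem~3.12]{CasoPlanar} and used as a black box. So there is no ``paper's own proof'' to compare against. That said, your reasoning is exactly the natural one and relies only on facts already recalled in Section~\ref{sec:Preliminaries}: the characterization in Theorem~\ref{theo:characterization optimum distance} gives disjointness (so $|\cC|=|\cC_i|$) and forces $\cC_i$ to be a partial spread; the bound~(\ref{eq: bound partial spreads}) together with $k\mid n$ yields the inequality; and the covering count shows that a partial spread of size $\frac{q^n-1}{q^k-1}$ is necessarily a spread. Nothing is missing.
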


In the same paper, it is shown that, for the full type vector $(1,\dots,n-1)$, it is possible to find optimum distance full flag codes having a spread as a $k$-projected code only if either $n=2k$ or $n=3$ and $k=1$. Observe that there, the authors always work with full flag codes (the full type vector is fixed) and then provide conditions on $n$ and $k$. Now, we deal with the inverse problem: given $n$ and a divisor $k$ of $n$, we look for conditions on the type vector of an optimum distance flag code on $\bbF_q^n$ having a $k$-spread as a projected code. We conclude that not all the type vectors are allowed. Let us describe the admissible ones and provide a construction of optimum distance flag codes for them, based on the existence of perfect matchings in a specific graph.

\subsection{Admissible type vectors}

This paper is devoted to explore the existence of optimum distance flag codes of a general type vector $(t_1,\dots,t_r)$, not necessarily the full type, having a spread as their $i$-projected code when $t_i$ is a divisor of $n$. Next result states the necessary conditions that the type vector $(t_1,\dots,t_r)$ must satisfy.

\begin{theorem}
\label{prop: type (1,...,k, n-k, ..., n-1)}
Let $\cC$ be an optimum distance flag code of type $(t_1, \dots, t_r)$on $\bbF_q^n$. Assume that some dimension $t_i=k$ divides $n$ and  the associated projected code $\cC_i$ is a $k$-spread. Then, for each $j\in\{1, \dots, r\}$, either $t_j\leq k$ or $t_j \geq n-k$.

\end{theorem}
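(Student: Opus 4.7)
The plan is to derive a contradiction by comparing cardinalities. By Theorem~\ref{theo:characterization optimum distance}, $\cC$ is disjoint and every projected code $\cC_j$ attains the maximum subspace distance in $\cG_q(t_j,n)$; together with the hypothesis that $\cC_i$ is a $k$-spread, disjointness will yield
$$
|\cC|=|\cC_i|=\frac{q^n-1}{q^k-1},
$$
which is an integer because $k\mid n$. Suppose, aiming at a contradiction, that some index $j$ satisfies $k<t_j<n-k$ (in particular $k<n/2$, so $k\leq\lfloor n/2\rfloor$). I would then split the argument into two cases according to whether $t_j\leq\lfloor n/2\rfloor$ or $t_j>\lfloor n/2\rfloor$.

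In the first case, $\cC_j$ lies in $\cG_q(t_j,n)$ with $2t_j\leq n$ and attains distance $2t_j$, so it is a partial spread. Applying the bound (\ref{eq: bound partial spreads}) I obtain
$$
|\cC_j|\leq \left\lfloor \frac{q^n-1}{q^{t_j}-1}\right\rfloor.
$$
Since $t_j>k$ forces $q^{t_j}-1>q^k-1$, one has $\frac{q^n-1}{q^{t_j}-1}<\frac{q^n-1}{q^k-1}$; as the right-hand side is an integer, this gives $|\cC_j|<|\cC|$, contradicting disjointness.

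The second case $t_j>\lfloor n/2\rfloor$ will be the main obstacle, and I plan to reduce it to the first case via duality. The dual $\cC_j^\perp\subseteq \cG_q(n-t_j,n)$ has the same cardinality and minimum distance $2(n-t_j)$ as $\cC_j$; since $n-t_j\leq\lfloor n/2\rfloor$, this is exactly the maximum possible distance in $\cG_q(n-t_j,n)$, so $\cC_j^\perp$ is itself a partial spread. Now the assumption $t_j<n-k$ rewrites as $n-t_j>k$, and repeating the previous computation with $n-t_j$ in place of $t_j$ I get
$$
|\cC_j|=|\cC_j^\perp|\leq \left\lfloor \frac{q^n-1}{q^{n-t_j}-1}\right\rfloor<\frac{q^n-1}{q^k-1}=|\cC|,
$$
again contradicting disjointness and completing the proof.
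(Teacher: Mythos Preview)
Your proof is correct and follows essentially the same route as the paper's: both argue by contradiction via the partial spread bound (\ref{eq: bound partial spreads}), splitting into the cases $t_j\leq\lfloor n/2\rfloor$ and $t_j>\lfloor n/2\rfloor$ and passing to the dual code in the latter. The only cosmetic differences are that the paper reduces to the single index $j=i+1$ (using monotonicity of the type vector) and is less explicit than you are about why $\lfloor (q^n-1)/(q^{t_j}-1)\rfloor$ is strictly below the integer $(q^n-1)/(q^k-1)$.
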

\begin{proof}

Notice that in case $i=r$, clearly $t_j\leq t_r=k$, for every $j=1, \dots, r$. Suppose that $i<r$. Let us show that $t_{i+1}\geq n-k$.

Since $t_i=k$ divides $n$, we can write $n=sk$, for some $s\geq 2$. If $s=2$, we have that $n-k=k$ and the result trivially holds. In case $s>2$, then $s<2(s-1)$ and we have that $2k < n < 2(s-1)k=2(n-k)$. We deduce that $k \leq \lfloor \frac{n}{2} \rfloor < n-k$.
Now, by contradiction, assume that $t_{i+1}< n-k$. We distinguish two possibilities:
\begin{enumerate}
\item If $k < t_{i+1} \leq  \lfloor \frac{n}{2}\rfloor$, since $\cC$ is an optimum distance flag code, by  Theorem \ref{theo:characterization optimum distance}, its projected code $\cC_{i+1}$ must be a partial spread of dimension $t_{i+1}$ and cardinality $\vert \cC_{i+1}\vert=\vert \cC_{i}\vert=\frac{q^{n}-1}{q^{k}-1}$. Contradiction with (\ref{eq: bound partial spreads}).
\item If $\lfloor \frac{n}{2}\rfloor < t_{i+1} <n-k$, the projected code $\cC_{i+1}$ has to be an equidistant $(2t_{i+1}-n)$-intersecting constant dimension code. In other words, the subspace distance of $\cC_{i+1}$ is $2(n-t_{i+1})$. Hence, its dual code $\cC_{i+1}^{\perp}$ is a partial spread of dimension $n-t_{i+1}>n-k>k$ and cardinality  $\vert \cC_{i+1}\vert=\vert \cC_{i}\vert=\frac{q^{n}-1}{q^{k}-1}$, which again contradicts (\ref{eq: bound partial spreads}).
\end{enumerate}
We conclude that $t_{i+1} \geq n-k$.
\end{proof}

\begin{remark}
This result provides a necessary condition on the type vector of any optimum distance flag code on $\bbF_q^n$ having a $k$-spread as a projected code: clearly the dimension $k$ must appear in the type vector but no dimension between $k+1$ and $n-k+1$ can be part of it. Notice that every type vector containing the dimension $k$ is admissible when $n=2k$ since, in that case, it holds $k=n-k$. This particular case has been already studied in \cite{CasoPlanar}, where it was proved that optimum distance flag codes of any type vector containing the dimension $k$ can be constructed from a $k$-spread (planar spread). Moreover, those codes were shown to attain the maximum possible cardinality as well. In the next subsection, we tackle the problem of constructing flag codes attaining the maximum distance and having a $k$-spread as their projected code for any admissible type vector in the general case  $n=ks$, for $s\geq 3$.
\end{remark}

\subsection{A construction based on perfect matchings}\label{sec:our construction}
This part is devoted to describe a specific construction of optimum distance flag codes on $\bbF_q^n$ from a $k$-spread of a given admissible type vector $(t_1, \dots, t_r)$. By means of Theorem \ref{prop: type (1,...,k, n-k, ..., n-1)}, if such codes exist, their type vector must satisfy $k \in \{t_1, \dots, t_r\} \subseteq \{1, \dots, k,  n-k, \dots, n-1\}$. For the sake of simplicity, we undertake this construction in several phases: we consider first the admissible type vector $(1,n-1)$, that is, the construction of optimum distance flag codes from the spread of lines. Secondly, by using the field reduction map defined in Section \ref{subsec:results on finite fields}, we properly translate the construction in the first step to get optimum distance flag codes of type vector $(k,n-k)$ having the $k$-spread $\cS$ introduced in (\ref{eq:ManganielloSpreadDefinition}) as its first projected code. Then, taking advantage of certain properties of the $k$-spread $\cS$, we extend the construction in the second step to obtain optimum distance flag codes of the \emph{full admissible type}, that is, $(1, \dots, k,  n-k, \dots, n-1)$. Finally, this last construction gives optimum distance flag codes of any admissible type vector after a suitable \emph{puncturing process}. Let us explain in detail all these stages.

\subsubsection{The type vector $(1,n-1)$: starting from the spread of lines}\label{subsec: case k=1}

Take $n\geq 3$. In this section we provide a construction of optimum distance flag codes on $\bbF_q^n$ from the spread of lines, that is, having the Grassmannian $\cG_q(1, n)$ as a projected code. By Theorem \ref{prop: type (1,...,k, n-k, ..., n-1)}, the only admissible type vector in this case is $(1,n-1)$. In other words, to give an optimum distance flag code from the spread of lines of $\bbF_q^n,$ we have to provide a family of $|\cG_q(1, n)|$ pairwise disjoint flags of length two, all of them consisting of a line contained in a hyperplane. To do so, we translate this problem to the one of finding perfect matchings in bipartite regular graphs, using the results given in Section \ref{subsec: Matchings in bipartite graphs}. Let us precise this.

Consider the graph $G=(V, E)$, with set of vertices $V=\cG_q(1,n) \cup \cG_q(n-1,n)$ and set of edges $E$ defined by
$$
E=\{(l,H)\in \cG_q(1,n) \times \cG_q(n-1,n)  \ | \ l\subset H \}.
$$
Notice that the set of vertices in $G$ consists of the lines and hyperplanes of $\bbF_q^n$. An edge $(l, H)$ of $G$  exists if, and only if, the line $l$ is contained in the hyperplane $H$. With this notation, next result holds.

\begin{proposition}
The graph $G=(V, E)$ is bipartite and $\frac{q^{(n-1)}-1}{q-1}$-regular.
\end{proposition}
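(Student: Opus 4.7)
The plan is to verify the two claims separately, both by elementary counting. The bipartite property is essentially by construction: since $n\geq 3$, the sets $\cG_q(1,n)$ and $\cG_q(n-1,n)$ are disjoint, so $V=\cG_q(1,n)\sqcup\cG_q(n-1,n)$ is a genuine partition; the very definition of $E$ forces every edge to have one endpoint in each part. Hence no two vertices on the same side are adjacent, and $G$ is bipartite with this natural bipartition.

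For regularity, I would compute the two types of degrees separately and check they coincide. Given any line $l\in\cG_q(1,n)$, its neighbours in $G$ are precisely the hyperplanes of $\bbF_q^n$ that contain $l$. Passing to the quotient $\bbF_q^n/l\cong\bbF_q^{n-1}$, these hyperplanes correspond bijectively to the hyperplanes of $\bbF_q^{n-1}$, and by the standard duality $|\cG_q(n-2,n-1)|=|\cG_q(1,n-1)|$ there are exactly $\frac{q^{n-1}-1}{q-1}$ of them.

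On the other side, given a hyperplane $H\in\cG_q(n-1,n)$, its neighbours are the lines of $\bbF_q^n$ contained in $H$. These are precisely the lines of the $(n-1)$-dimensional space $H$ viewed on its own, so there are $|\cG_q(1,n-1)|=\frac{q^{n-1}-1}{q-1}$ of them. Both degrees agree with the claimed value, so $G$ is $\frac{q^{n-1}-1}{q-1}$-regular.

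This is essentially a bookkeeping argument, and I do not expect any real obstacle; everything reduces to the well-known count of lines (equivalently hyperplanes) in an $(n-1)$-dimensional space over $\bbF_q$, together with the quotient trick on one side. The only point worth highlighting is that the degree is the same on both sides of the bipartition, which is precisely the hypothesis needed to apply Theorem \ref{theo:perfect matching} in the constructions that follow.
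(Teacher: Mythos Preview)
Your proof is correct and follows essentially the same approach as the paper: both verify bipartiteness from the definition of $E$ and compute the common degree $\frac{q^{n-1}-1}{q-1}$ by counting hyperplanes through a fixed line and lines inside a fixed hyperplane. Your version is simply more explicit, spelling out the quotient argument and the duality $|\cG_q(n-2,n-1)|=|\cG_q(1,n-1)|$ that the paper leaves implicit.
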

\begin{proof}
It is clear that $G$ is a bipartite graph by definition. Moreover, the number of hyperplanes containing a fixed line coincides with the number of lines lying on a given hyperplane. This number is precisely $\frac{q^{(n-1)}-1}{q-1}$. Then, the degree of any vertex in $G$  coincides with this value and then $G$ is $\frac{q^{(n-1)}-1}{q-1}$-regular.
\end{proof}

Note that the problem of giving a family of flags with the desired conditions can be seen as the combinatorial problem of giving a perfect matching in $G$. Since $G$ is a regular bipartite graph, we can use Theorem \ref{theo:perfect matching} to conclude that there exist perfect matchings in $G$. More precisely, there exists a subset $M\subset E$ that matches $V$, that is, each edge in $M$ has an extremity in $\cG_q(1,n)$ and the other one in $\cG_q(n-1,n)$. In particular, the set $M$ has a number of edges equal to $|\cG_q(1,n)|$. This matching $M$ induces naturally a bijection, also denoted by $M$, between the set of lines and the set of hyperplanes in $\mathbb{F}_q^n$. Moreover, by the definition of $E$, we have that the map  $M : \cG_q(1,n) \rightarrow \cG_q(n-1,n)$ satisfies that $l\subset M(l)$ for any $l\in \cG_q(1,n)$. This fact allows us to construct a family of flags of type $(1,n-1)$ on $\bbF_q^n$ in the following way:
\begin{equation}\label{def: optimum distance flag code on spread lines}
\widetilde{\cC}=\widetilde{\cC}_M=\{ (l,M(l)) \ | \ l \in \cG_q(1,n) \}.
\end{equation}
Let us see that the family $\widetilde{\cC}$ is a flag code  with projected codes $\widetilde{\cC}_1=\cG_q(1,n)$ and $\widetilde{\cC}_2=\cG_q(n-1,n)$ satisfying the desired conditions.
\begin{theorem}\label{theo:optimum distance (1,n-1)}
Given $n\geq 3$, the code $\widetilde{\cC}$ defined in (\ref{def: optimum distance flag code on spread lines}) is an optimum distance flag code of type $(1,n-1)$ on $\bbF_q^n$ with the spread of lines as a projected code.
\end{theorem}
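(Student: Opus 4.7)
The plan is to apply Theorem \ref{theo:characterization optimum distance}, which reduces the claim to verifying two things: that $\widetilde{\cC}$ is disjoint, and that both projected codes $\widetilde{\cC}_1, \widetilde{\cC}_2$ attain the maximum possible subspace distance in their respective Grassmannians. Before that, I would spend one line checking that $\widetilde{\cC}$ is genuinely a flag code of type $(1,n-1)$: every element $(l,M(l))$ is a pair in $\cG_q(1,n) \times \cG_q(n-1,n)$, and the defining property of the edge set $E$ guarantees $l \subset M(l)$, so each element of $\widetilde{\cC}$ is a well-formed flag.

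Next I would identify the projected codes. Clearly $\widetilde{\cC}_1 = \{l \,:\, l \in \cG_q(1,n)\} = \cG_q(1,n)$. For $\widetilde{\cC}_2$, I use that the perfect matching $M$ induces a bijection between $\cG_q(1,n)$ and $\cG_q(n-1,n)$ (since it matches every vertex in the bipartition), hence $\widetilde{\cC}_2 = \{M(l) \,:\, l \in \cG_q(1,n)\} = \cG_q(n-1,n)$. Disjointness of $\widetilde{\cC}$ then follows immediately: the first projection $p_1$ is injective by construction (distinct lines give distinct flags), and the second projection $p_2 = M$ is injective because $M$ is a bijection. In particular, $|\widetilde{\cC}| = |\widetilde{\cC}_1| = |\widetilde{\cC}_2| = |\cG_q(1,n)|$.

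It remains to check that each projected code attains the maximum subspace distance allowed by (\ref{eq: bound subspace distance}). For $\widetilde{\cC}_1 = \cG_q(1,n)$, any two distinct lines meet in $\{0\}$, so their subspace distance is $2$, which matches the bound $2k = 2$ for $k = 1 \leq n/2$. For $\widetilde{\cC}_2 = \cG_q(n-1,n)$, any two distinct hyperplanes intersect in a subspace of dimension $n-2$, so their subspace distance is $2(n-1)-2(n-2) = 2$, which equals the bound $2(n-(n-1)) = 2$ from (\ref{eq: bound subspace distance}). (Equivalently, $\cG_q(1,n)$ is the trivial spread of lines and $\cG_q(n-1,n)$ is its dual, so both are spreads and therefore have maximum subspace distance.)

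With these pieces in place, Theorem \ref{theo:characterization optimum distance} applies directly and yields that $\widetilde{\cC}$ is an optimum distance flag code of type $(1,n-1)$. I do not foresee a real obstacle here: all the substance is packaged into the existence of the perfect matching $M$ (already handled by Theorem \ref{theo:perfect matching} applied to the regular bipartite graph $G$); the remainder is a bookkeeping check that the two Grassmannian extremes used as projected codes automatically realize the maximum subspace distance, and that the matching produces no repetitions in either coordinate.
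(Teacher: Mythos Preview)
Your proposal is correct and follows essentially the same approach as the paper: identify the projected codes as $\cG_q(1,n)$ and $\cG_q(n-1,n)$ via the bijectivity of $M$, note that both attain the maximum subspace distance $2$, and invoke Theorem~\ref{theo:characterization optimum distance}. Your write-up is in fact more detailed than the paper's (which compresses the disjointness and distance checks into two sentences), but the logical skeleton is identical.
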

\begin{proof}
Since the map $M$ defined above is bijective, the code $\widetilde{\cC}$ must be a disjoint flag code with projected codes $\widetilde{\cC}_1=\cG_q(1,n)$ and $\widetilde{\cC}_2=\cG_q(n-1,n).$ In particular, as $d_S(\widetilde{\cC}_1)=d_S(\widetilde{\cC}_2)=2$ is the maximum possible distance for constant dimension codes of dimension $1$ and $n-1$ in $\bbF_q^n$. By Theorem \ref{theo:characterization optimum distance}, we have that $\widetilde{\cC}$ is an optimum distance flag code with $\cG_q(1,n)$ as a projected code.
\end{proof}
\begin{remark}
Observe that, by means of Theorem \ref{theo: maximum cardinality iff spread}, our code $\widetilde{\cC}$ defined as above, attains the maximum possible cardinality for flag codes of type $(1,n-1)$ and distance $4$, which is 
$$
|\widetilde{\cC}| = \dfrac{q^n-1}{q-1}= q^{n-1}+q^{n-2}+\dots + q+1.
$$
For the particular case $n=3$, the previous bound was given in \cite{Kurz20}, where the author studied bounds for the cardinality of full flag codes with a given distance. Observe that this is the only case in which optimum distance full flag codes with a spread as a projected code can be constructed, apart from the case $n=2k$, already studied in \cite{CasoPlanar}.
\end{remark}

Note that, despite the fact that Theorem \ref{theo:perfect matching} guarantees the existence of perfect matchings in regular bipartite graphs, in order to provide an example of optimum distance flag codes of type $(1,n-1)$ on $\bbF_q^n$, we need to exhibit a precise matching in $G$. The problem of finding perfect matchings is classical in Graph Theory. The reader can find an algorithm to construct perfect matchings in the general setting of regular bipartite graphs in \cite{Diestel2005} and \cite{GoKapKhan13}. In Section \ref{subsec:example} we give an example of a flag code constructed from a perfect matching generated by an algorithm programmed in GAP and adapting the procedure described in \cite{Diestel2005} to our specific problem.

\subsubsection{The type vector $(k,n-k)$}\label{subsect:(k,n-k) type}
Take $n=ks$ a natural number with $k\geq 2$ and $s\geq 3$. In order to construct optimum distance flag codes of type $(k, n-k)$ on $\bbF_q^{n}$, we will use the construction of optimum distance flag codes of type $(1,s-1)$ on $\bbF_{q^k}^s$ given in Section \ref{subsec: case k=1} together with the field reduction map defined in Section \ref{subsec:results on finite fields}. Let us explain this construction.

Let $M : \cG_{q^k}(1,s) \rightarrow \cG_{q^k}(s-1,s)$ be a bijection such that $l\subset M(l)$ for any $l\in \cG_{q^k}(1,s)$. By Theorem \ref{theo:optimum distance (1,n-1)} we know that the code
$$\widetilde{\cC}=\widetilde{\cC}_M=\{ (l,M(l)) \ | \ l \in \cG_{q^k}(1,s) \}$$
is an optimum distance flag of type $(1, s-1)$ on $\bbF_{q^k}^s$. In particular, the code $\widetilde{\cC}$ is disjoint.  On the other hand, given $P \in  GL_k(q)$ the companion matrix of a monic irreducible polynomial of degree $k$ in $\bbF_q[x]$, the associated field isomorphism $ \phi:\bbF_{q^k}\rightarrow \bbF_q[P]$ induces the field reduction ${\varphi}: \cP_{q^k}(s)                              \longrightarrow  \cP_q(ks)$ as in (\ref{eq: field reduction on m-subspaces}).
Notice that, by Proposition \ref{prop: field reduction}, we have that for any $m\in\{1,\dots,s-1\}$, it holds that $\varphi(\cG_{q^k}(m,s))\subseteq \cG_{q}(mk,sk)$. Moreover, given $\cU, \cV$ subspaces of $ \bbF_{q^k}^s$ with $\cU\subseteq \cV$, then $\varphi(\cU) \subseteq \varphi(\cV)$. As a consequence, if $(l, M(l)) \in \widetilde{\cC}$, then $(\varphi(l),\varphi(M(l)))$ is a flag of type $(k, n-k)$ on $\bbF_{q}^n$.  This fact allows us to define a family of flags
over $\bbF_{q}^n$ as follows
\begin{equation}\label{eq:code(k,n-k)}
\widehat{\cC}=\{ (\varphi(l),\varphi(M(l))) \ | \ l \in \cG_{q^k}(1,s) \}.
\end{equation}

By Remark \ref{rem:bijection lines and spread} we know that $\varphi$ gives a bijection between $\cG_{q^k}(1,s)$ and the $k$-spread  
$\cS=\cS(s,k,P)$ defined in (\ref{eq:ManganielloSpreadDefinition}). Hence,  the family $\widehat{\cC}$ is a flag code with projected codes $$\widehat{\cC}_1=\varphi(\cG_{q^k}(1,s))=\cS , \,  \widehat{\cC}_2=\varphi(\cG_{q^k}(s-1,s))$$
and
the following result holds:

\begin{theorem}
The code $\widehat{\cC}$ defined in (\ref{eq:code(k,n-k)}) is an optimum distance flag code of type $(k,n-k)$ on $\bbF_q^n$ having the spread $\cS$ as a projected code.
\end{theorem}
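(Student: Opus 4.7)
The plan is to verify the two characterizing conditions of Theorem \ref{theo:characterization optimum distance}: namely, that $\widehat{\cC}$ is a disjoint flag code and that each of its projected codes $\widehat{\cC}_1,\widehat{\cC}_2$ attains the maximum possible subspace distance in its Grassmannian. Throughout, the key tool will be the field reduction map $\varphi$ and the four properties collected in Proposition \ref{prop: field reduction}; the starting point is the fact, provided by Theorem \ref{theo:optimum distance (1,n-1)}, that $\widetilde{\cC}$ is itself an optimum distance flag code of type $(1,s-1)$ on $\bbF_{q^k}^s$.

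First I would check that each element of $\widehat{\cC}$ is genuinely a flag of type $(k,n-k)$. By Proposition \ref{prop: field reduction}(4), $\varphi(l)\in\cG_q(k,n)$ and $\varphi(M(l))\in\cG_q((s-1)k,n)=\cG_q(n-k,n)$; since $l\subseteq M(l)$ by construction, Proposition \ref{prop: field reduction}(3) yields $\varphi(l)\subseteq \varphi(M(l))$. Moreover, by Remark \ref{rem:bijection lines and spread}, $\varphi$ restricted to $\cG_{q^k}(1,s)$ is a bijection onto the spread $\cS$, so $\widehat{\cC}_1=\cS$, which gives the spread as the first projected code.

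Next I would establish disjointness. Injectivity of $\varphi$ (Proposition \ref{prop: field reduction}(1)) together with the bijectivity of $M:\cG_{q^k}(1,s)\to\cG_{q^k}(s-1,s)$ shows that both maps $l\mapsto \varphi(l)$ and $l\mapsto \varphi(M(l))$ are injective, so $|\widehat{\cC}_1|=|\widehat{\cC}_2|=|\widehat{\cC}|=|\cG_{q^k}(1,s)|$. For the maximum distance of the projected codes, observe that $\widehat{\cC}_1=\cS$ is a $k$-spread, hence achieves $d_S(\widehat{\cC}_1)=2k$, which is the bound of (\ref{eq: bound subspace distance}) since $2k\leq n$. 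For the second projected code, pick distinct $l_1,l_2\in\cG_{q^k}(1,s)$, so $M(l_1)\neq M(l_2)$ and $\dim_{q^k}(M(l_1)\cap M(l_2))=s-2$. Proposition \ref{prop: field reduction}(2) gives
$$\varphi(M(l_1))\cap \varphi(M(l_2))=\varphi(M(l_1)\cap M(l_2)),$$
a subspace of $\bbF_q^n$ of dimension $k(s-2)=n-2k$ by item (4) of the same proposition. Therefore
$$d_S(\varphi(M(l_1)),\varphi(M(l_2)))=2\bigl((n-k)-(n-2k)\bigr)=2k,$$
which is exactly the maximum subspace distance for codes in $\cG_q(n-k,n)$ according to (\ref{eq: bound subspace distance}), since $2(n-k)>n$.

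Having verified that $\widehat{\cC}$ is disjoint and that both projected codes attain the maximum possible subspace distance, Theorem \ref{theo:characterization optimum distance} immediately concludes that $\widehat{\cC}$ is an optimum distance flag code of type $(k,n-k)$ on $\bbF_q^n$ whose first projected code is the spread $\cS$. The argument is a direct bookkeeping exercise once the properties of $\varphi$ in Proposition \ref{prop: field reduction} are in hand; the only subtle point, which is not really an obstacle but requires care, is the use of item (2) of that proposition to transfer the intersection dimension from $\bbF_{q^k}$ to $\bbF_q$, so that the distance of the second projected code can be explicitly computed.
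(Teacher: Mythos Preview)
Your proof is correct and follows essentially the same route as the paper: you establish disjointness via the injectivity of $\varphi$ (Proposition \ref{prop: field reduction}(1)) together with the bijectivity of $M$, you identify $\widehat{\cC}_1$ with the spread $\cS$, and you compute the intersection dimension for $\widehat{\cC}_2$ using Proposition \ref{prop: field reduction}(2) and (4) to get $n-2k$, then invoke Theorem \ref{theo:characterization optimum distance}. The paper's argument is the same in substance; your write-up is slightly more explicit in checking the flag structure and in stating the distance $2k$ for $\widehat{\cC}_2$, but there is no genuine difference in approach.
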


\begin{proof}
Since $\widehat{\cC}_1=\varphi(\cG_{q^k}(1,s))=\cS$, we have $|\widehat{\cC}|=|\widehat{\cC}_1|=|\varphi(\cG_{q^k}(1,s))|=|\cG_{q^k}(1,s)|$. Furthermore, by the injectivity of $\varphi$ (see Proposition \ref{prop: field reduction}), we also have that $|\widehat{\cC}_2|=|\varphi(\cG_{q^k}(s-1,s))|=|\cG_{q^k}(s-1,s)|$. As $|\cG_{q^k}(1,s)|=|\cG_{q^k}(s-1,s)|$ we conclude that $|\widehat{\cC}|=|\widehat{\cC}_1|=|\widehat{\cC}_2|$ and $\widehat{\cC}$ is disjoint.  
 Let us now prove that the projected codes  of $\widehat{\cC}$ are constant dimension codes with the maximum possible distance. Since the projected code $\widehat{\cC}_1$ is a spread, it is enough to check this property for $\widehat{\cC}_2$. Given any two different subspaces $\varphi(H), \varphi(H') \in \widehat{\cC}_2=\varphi(\cG_{q^k}(s-1,s))$, by means of Proposition \ref{prop: field reduction}, we have that $H, H'$ are different hyperplanes. Moreover, since the intersection of any two hyperplanes in $\mathbb{F}_{q^k}^s$ is a $(s-2)$-dimensional subspace of $\bbF_{q^k}^s$ we have that
 $$
\dim(\varphi(H)\cap\varphi(H')) = \dim(\varphi(H\cap H')) =k(s-2)=n-2k.
$$
Notice that $n-2k= 2(n-k)-n$ is the minimum among the possible dimensions of the intersection of subspaces in $\cG_q(n-k,n)$. Hence $\widehat{\cC}_2$ is an equidistant $(n-2k)$-intersecting constant dimension code and, by applying Theorem \ref{theo:characterization optimum distance}, we are done.
\end{proof}
\subsubsection{The full admissible type vector}
In this subsection we finally tackle the construction of optimum distance flag codes of the full admissible type, that is, of type $(1, \dots, k, n-k, \dots, n-1)$ on $\bbF_q^{n}$ having the $k$-spread $\cS$ defined in (\ref{eq:ManganielloSpreadDefinition}) as a projected code. To do this, we start from the optimum distance flag code $\widehat{\cC}$ of type $(k,n-k)$ defined in (\ref{eq:code(k,n-k)}).  Recall that the construction of this code depends on the choice of a bijection $M : \cG_{q^k}(1,s) \rightarrow \cG_{q^k}(s-1,s)$ such that $l\subset M(l)$ for any $l\in \cG_{q^k}(1,s)$.

Let us fix an order in the set of lines of $\bbF_{q^k}^s$ and write $\cG_{q^k}(1,s)=\{l_1, l_2, \dots, l_L\}$, where $L=|\cG_{q^k}(1,s)|$. This order in $\cG_{q^k}(1,s)$ naturally induces respective orders in the sets $\cG_{q^k}(s-1,s)$, $\cS$ and $\cH=\varphi(\cG_{q^k}(s-1,s))$ as follows:
$$H_i=M(l_i), \, \cS_i=\varphi(l_i), \cH_i=\varphi(H_i)$$
for $i=1,\dots, L$.  Denote also by $\mathrm{S}_i $ the RREF generator matrix of $\cS_i$. Notice that $\mathrm{S}_i \in \Sigma$ where $\Sigma$ is the set defined in (\ref{eq:matrices for the spread}), and $\mathrm{S}_i=(\phi(x_{i1})| \dots| \phi(x_{is})),$ where $(x_{i1}, \dots, x_{is})\in \bbF_{q^k}^s$ is the  RREF matrix generating the line $l_i$. 

Now, given a hyperplane $H_i=M(l_i)$ of $\bbF_{q^k}^s$, we can write
$$
H_i = l_{i} \oplus  l_{i_2}  \oplus \dots \oplus l_{i_{s-1}},
$$ for $l_{i_2}, \dots, l_{i_{s-1}}$ some lines of $\bbF_{q^k}^s.$ By the properties of the field reduction $\varphi$ des\-cribed in Proposition \ref{prop: field reduction}, we have that 
$$
\cH_i=\varphi(H_i)=\cS_{i}\oplus \cS_{i_2}\oplus \dots \oplus \cS_{i_{s-1}}. 
$$
So, any subspace $\cH_i \in \cH$ can be decomposed as a direct sum of subspaces in $\cS$. This representation is not unique since $H_i$ can be written as direct sum of different collections of lines. Moreover, given that 
 for any line $l_{i_s}$ in $\bbF_{q^k}^s \setminus H_i$, it holds that $H_i\oplus l_{i_s}=\bbF_{q^k}^s$, by using Proposition \ref{prop: field reduction} again, we conclude that $\cH_i \oplus \cS_{i_s}=\bbF_q^{n}$. As a consequence, the rows of the matrix
\begin{equation}\label{def: matrices W}
\mathrm{W}_i=
\left(
\begin{array}{l}
\mathrm{S}_i\\
\mathrm{S}_{i_2}\\
\vdots\\
\mathrm{S}_{i_{s-1}}\\
\mathrm{S}_{i_s}\\
\end{array}
\right)
\end{equation} form a basis of $\bbF_q^{n}$. Moreover, any collection of $j\leq n$ rows of $\mathrm{W}_i$ generates a $j$-dimensional subspace of $\bbF_q^{n}$.

Denote by $\mathrm{W}_i^{(j)}$ the submatrix of $\mathrm{W}_i$  given by its first $j$ rows. We also denote by $\cW_i^{(j)}=\rsp(\mathrm{W}_i^{(j)})$. With this notation, it is clear that $\cW_i^{(k)}=\cS_i$ and $\cW_i^{(n-k)}=\cH_i$. In addition, for any $1\leq j_1 < j_2 \leq n$, it holds that $\cW_i^{(j_1)} \subsetneq \cW_i^{(j_2)}$. This fact allows us to define $\cF_{\mathrm{W}_i}$ the {\em flag of type $(1, \dots, k, n-k, \dots, n-1)$ associated to $\mathrm{W}_i$ } in the following way:

\begin{equation}\label{eq: flag associated to W_i}
\cF_{\mathrm{W}_i}=(\cW_i^{(1)}, \dots, \cW_i^{(k-1)}, \cS_i, \cH_i, \cW_i^{(n-k+1)}, \dots, \cW_i^{(n-1)}).
\end{equation}
Finally, given the family of matrices $\{W_i\}_{i=1}^L$, we define the family of associated flags of type $(1,\dots,k,n-k,\dots,n-1)$:

\begin{equation}\label{def: optimum distance flag code}
\cC=\{ \cF_{\mathrm{W}_i}  \ | \ i=1, \dots, L \}.
\end{equation}

Let us see that $\cC$ is an optimum distance flag code. To do so, we analyze the structure of  its projected codes:

\begin{equation}\label{def: pojected before k}
\cC_j=\{ \cW_i^{(j)} \ | \ i=1, \dots, L \}
\end{equation}
and
\begin{equation}\label{def: pojected after k}
\cC_{k+j}=\{ \cW_i^{(n-k+j-1)} \ | \ i=1, \dots, L \},
\end{equation}
for all $j=1, \dots, k.$

\begin{proposition}\label{prop: optimum distance projected code}
Given the flag code $\cC$ defined as above, for each $j=1, \dots, k$ the following is satisfied:
\begin{enumerate}
\item The code $\cC_j$ is a partial spread in $\cG_q(j, n)$  with cardinality $L=\frac{q^n-1}{q^k-1}$.
\item The code $\cC_{k+j}$ is an equidistant $(n-2k+2(j-1))$-intersecting constant dimension code in $\cG_q(n-k+j-1, n)$ with cardinality $L=\frac{q^n-1}{q^k-1}$. As a consequence, $\cC_{k+j}$ is a constant dimension code of maximum distance. \label{theo: projected after k}
\end{enumerate}
In particular, we have that $\cC_k=\cS$ and $\cC_{k+1}=\cH$.
\end{proposition}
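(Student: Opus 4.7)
My plan is to exploit the nested structure of each matrix $\mathrm{W}_i$: its first $k$ rows span $\cS_i$ and its first $n-k$ rows span $\cH_i$, so that the subspaces $\cW_i^{(j)}$ for $j\le k$ lie inside the spread element $\cS_i$ while the subspaces $\cW_i^{(n-k+j-1)}$ for $j\ge 1$ contain the hyperplane $\cH_i$. Part (1) will then come out of the spread property of $\cS$ (disjointness downward), and part (2) from the fact that two distinct hyperplanes of $\bbF_q^n$ already sum to the whole space (``disjointness'' upward in the dual sense). The particular cases $\cC_k=\cS$ and $\cC_{k+1}=\cH$ are immediate from the definitions $\cW_i^{(k)}=\cS_i$ and $\cW_i^{(n-k)}=\cH_i$.

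For part (1), fix $j\in\{1,\dots,k\}$. The inclusion $\cW_i^{(j)}\subseteq\cS_i$ gives $\cW_i^{(j)}\cap\cW_{i'}^{(j)}\subseteq\cS_i\cap\cS_{i'}=\{0\}$ for $i\ne i'$, since $\cS$ is a $k$-spread of $\bbF_q^n$. The same containment forces the subspaces $\cW_i^{(j)}$ to be pairwise distinct (an equality would place a nonzero subspace inside $\cS_i\cap\cS_{i'}$). Hence $\cC_j$ is a collection of $L=|\cS|=\frac{q^n-1}{q^k-1}$ pairwise trivially intersecting $j$-dimensional subspaces, i.e.\ a partial spread in $\cG_q(j,n)$, automatically equidistant with subspace distance $2j$.

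For part (2), fix $j\in\{1,\dots,k\}$ and set $d:=n-k+j-1$, so $\dim\cW_i^{(d)}=d$ and $\cH_i\subseteq\cW_i^{(d)}$. For $i\ne i'$, the hyperplanes $\cH_i,\cH_{i'}$ are distinct, so $\cH_i+\cH_{i'}=\bbF_q^n$, which forces $\cW_i^{(d)}+\cW_{i'}^{(d)}=\bbF_q^n$. Grassmann's formula then yields
\[
\dim\bigl(\cW_i^{(d)}\cap\cW_{i'}^{(d)}\bigr)=2d-n=n-2k+2(j-1),
\]
which is precisely the minimum possible intersection dimension for two subspaces of dimension $d$ in $\bbF_q^n$ (this uses $2d\ge n$, which follows from $n=ks$ with $s\ge 3$). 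In particular distinct indices $i$ give distinct subspaces (otherwise the common subspace would contain $\cH_i+\cH_{i'}=\bbF_q^n$, contradicting $d<n$), so $|\cC_{k+j}|=L$, and $\cC_{k+j}$ is an equidistant $(n-2k+2(j-1))$-intersecting constant dimension code attaining the upper bound (\ref{eq: bound subspace distance}).

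No step is really an obstacle; the only subtlety is keeping track of the correct index ranges and of which direction the nesting goes (subspaces of $\cS_i$ on the lower half, overspaces of $\cH_i$ on the upper half), because the two sides of the flag are treated by dual arguments. The rest is essentially bookkeeping with Proposition \ref{prop: field reduction} and the dimension formula.
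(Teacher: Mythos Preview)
Your argument is correct and follows the same route as the paper's proof: use $\cW_i^{(j)}\subseteq\cS_i$ and the spread property for part (1), and use $\cH_i\subseteq\cW_i^{(d)}$ together with $\cH_i+\cH_{i'}=\bbF_q^n$ and the dimension formula for part (2). One terminological slip to fix: the subspaces $\cH_i$ are \emph{not} hyperplanes of $\bbF_q^n$ (they have dimension $n-k$, and here $k\ge 2$), so the sentence ``the hyperplanes $\cH_i,\cH_{i'}$ are distinct, so $\cH_i+\cH_{i'}=\bbF_q^n$'' is not a valid justification as written. The correct reason, which the paper invokes and which you implicitly rely on, is that $\cH_i=\varphi(H_i)$ with $H_i$ a hyperplane of $\bbF_{q^k}^s$, so $\dim(\cH_i\cap\cH_{i'})=\dim\varphi(H_i\cap H_{i'})=k(s-2)=n-2k$ by Proposition~\ref{prop: field reduction}, whence $\dim(\cH_i+\cH_{i'})=2(n-k)-(n-2k)=n$.
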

\begin{proof}
By construction it is clear that $\cC_k=\cS$ and $\cC_{k+1}=\cH$. Now, for any $1\leq j\leq k$, given two different indices $i_1,i_2\in \{ 1, \dots, L\}$, we have that
$$
\cW_{i_1}^{(j)} \cap \cW_{i_2}^{(j)} \subset \cS_{i_1} \cap \cS_{i_2} = \{0\}.
$$Hence, $\cC_j$ is a partial spread in the Grassmannian $\cG_q(j,n)$ with $|\cC_j|=L$.

To prove (\ref{theo: projected after k}), consider subspaces $\cW_{i_1}^{(n-k+j-1)}, \cW_{i_2}^{(n-k+j-1)} \in \cC_{k+j}$.  We know that $\dim(\cH_{i_1} \cap \cH_{i_2})=(s-2)k=n-2k$, then the subspace sum $\cH_{i_1} +\cH_{i_2}$ is the whole space $\bbF_q^{n}$. As a consequence, 
$$
n = \dim(\cH_{i_1} +\cH_{i_2}) \leq \dim(\cW_{i_1}^{(n-k+j-1)}+\cW_{i_2}^{(n-k+j-1)})\leq n
$$
and then it follows that
$$
\begin{array}{ccl}
n & = & \dim(\cW_{i_1}^{(n-k+j-1)} + \cW_{i_2}^{(n-k+j-1)})\\
  & = & 2(n-k+j-1)-\dim(\cW_{i_1}^{(n-k+j-1)} \cap\cW_{i_2}^{(n-k+j-1)}).
\end{array}
$$
Hence, we obtain
$$
\begin{array}{ccl}
\dim(\cW_{i_1}^{(n-k+j-1)} \cap\cW_{i_2}^{(n-k+j-1)}) & = & 2(n-k+j-1)-n \\
                                                      & = & n-2k+2(j-1),
\end{array}
$$ which is the minimum possible dimension of the intersection between subspaces of dimension $n-k+j-1$ of $\bbF_q^{n}$. Thus, we conclude that $\cC_{k+j}$ is an equidistant $(n-2k+2(j-1))$-intersecting constant dimension code with exactly $L$ elements. In particular, we have that  $d_S(\cC_{k+j})= 2(k-(j-1))$ and $\cC_{k+j}$ is a constant dimension code with the maximum distance.
\end{proof}

\begin{theorem}
The flag code $\cC$ defined in (\ref{def: optimum distance flag code}) is an optimum distance flag code of type $(1, \dots, k, n-k, \dots, n-1)$ on $\bbF_q^{n}$ with the $k$-spread $\cS$ as a $k$-projected code. This code has cardinality $|\cC|=L=\frac{q^n-1}{q^k-1}$ and distance $d_f(\cC)=2k(k+1)$.
\end{theorem}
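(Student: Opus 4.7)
The plan is to reduce everything to Theorem \ref{theo:characterization optimum distance}: it suffices to check that $\cC$ is disjoint and that each projected code attains the maximum possible subspace distance in its ambient Grassmannian. The bulk of the work has already been carried out in Proposition \ref{prop: optimum distance projected code}, so what remains is essentially bookkeeping.

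First I would establish disjointness and the cardinality. By construction, the $k$-th entry of the flag $\cF_{\mathrm{W}_i}$ is $\cS_i$, and the spread $\cS=\{\cS_1,\ldots,\cS_L\}$ consists of $L$ pairwise distinct subspaces, so the flags $\cF_{\mathrm{W}_1},\ldots,\cF_{\mathrm{W}_L}$ are themselves pairwise distinct and $|\cC|=L$. Proposition \ref{prop: optimum distance projected code} then gives $|\cC_j|=|\cC_{k+j}|=L$ for every $j\in\{1,\ldots,k\}$, so every projection $p_i$ is injective on $\cC$ and hence $\cC$ is disjoint. The equality $L=(q^n-1)/(q^k-1)$ is simply the size of a $k$-spread.

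Next I would check the maximality of the distance of each projected code. Since $n=ks$ with $s\geq 3$ and $k\geq 2$, one has $k\leq\lfloor n/2\rfloor$ and $n-k>\lfloor n/2\rfloor$. For $1\leq j\leq k$ the dimension $j$ satisfies $j\leq k\leq\lfloor n/2\rfloor$, so (\ref{eq: bound subspace distance}) gives maximum distance $2j$, which is realized because Proposition \ref{prop: optimum distance projected code}(1) shows that $\cC_j$ is a partial spread. For $1\leq j\leq k$ the dimension $n-k+j-1\geq n-k>\lfloor n/2\rfloor$, so the maximum possible subspace distance is $2(k-j+1)$, which is precisely the distance of the equidistant $(n-2k+2(j-1))$-intersecting code $\cC_{k+j}$ produced by Proposition \ref{prop: optimum distance projected code}(2). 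Applying Theorem \ref{theo:characterization optimum distance} then yields that $\cC$ is an optimum distance flag code.

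Finally, the flag distance is the sum of the distances of the projected codes:
\[
d_f(\cC)=\sum_{j=1}^{k}2j+\sum_{j=1}^{k}2(k-j+1)=k(k+1)+k(k+1)=2k(k+1),
\]
which indeed coincides with the bound (\ref{eq:quotamaxdistflag}) evaluated at the type $(1,\ldots,k,n-k,\ldots,n-1)$. No real obstacle appears, since the construction in (\ref{def: optimum distance flag code}) was engineered precisely so that Proposition \ref{prop: optimum distance projected code} supplies the hypotheses required by Theorem \ref{theo:characterization optimum distance}.
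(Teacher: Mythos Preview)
Your proof is correct and follows the same approach as the paper: invoke Proposition \ref{prop: optimum distance projected code} to verify the hypotheses of Theorem \ref{theo:characterization optimum distance}. The paper's own proof is a terse two sentences, while you have made the disjointness argument and the distance computation $d_f(\cC)=2k(k+1)$ explicit, but the logical structure is identical.
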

\begin{proof}
By means of Proposition \ref{prop: optimum distance projected code} we conclude that $\cC$ is a disjoint flag code of cardinality $L$ with projected codes attaining the maximum distance for their corresponding dimensions. Then, by Theorem \ref{theo:characterization optimum distance}, $\cC$ is an optimum distance flag code, that is,  $d_f(\cC)=2k(k+1)$.
\end{proof}
\begin{remark}
The code $\cC$ defined in (\ref{def: optimum distance flag code}) attains the maximum possible distance for flag codes of type $(1, \dots, k, n-k, \dots, n-1)$ on $\bbF_q^{n}$. Furthermore, by means of Theorem \ref{theo: maximum cardinality iff spread},  it also has the best possible size among the optimum distance flag codes of the full admissible type vector on $\bbF_q^n$.

\end{remark}

\subsubsection{The general case}

Finally, in order to get an optimum distance flag code of any admissible type vector with a $k$-spread as a projected code, we apply a \emph{puncturing process} to the code $\cC$ defined in (\ref{def: optimum distance flag code}). This process was already used in \cite{CasoPlanar} to get optimum distance flag codes having a planar spread as a projected code. Let us recall it. Fix an admissible type vector $(t_1, \dots, t_r)$, that is, a type vector such that $k \in \{t_1,\dots, t_r\}\subseteq\{1,\dots, k, n-k,\dots, n-1\}$. Consider a flag $\cF_{\mathrm{W_i}}$ in the code $\cC$ in (\ref{def: optimum distance flag code}). The punctured flag of type $(t_1, \dots, t_r)$ associated to $\cF_{\mathrm{W_i}}$ is the sequence
\begin{equation}\label{def: punctured flag}
(\cW_i^{(t_1)}, \dots, \cW_i^{(t_r)}).
\end{equation}
The  punctured flag code of type $(t_1, \dots, t_r)$ associated to $\cC$ is the code given by
\begin{equation}\label{def: punctured flag code}
\cC_{(t_1, \dots, t_r)}= \{ (\cW_i^{(t_1)}, \dots, \cW_i^{(t_r)}) \ | \ i=1, \dots, L\}.
\end{equation}
Observe that the projected codes of $\cC_{(t_1, \dots, t_r)}$ are, in particular, projected codes of $\cC$. Hence, the next result follows straightforwardly from this fact, together with Theorem \ref{theo: maximum cardinality iff spread}.

\begin{theorem}
Given $n$ and a divisor $k$ of $n$, for every admissible type vector $(t_1, \dots, t_r)$, the code $\cC_{(t_1, \dots, t_r)}$ defined as above is an optimum distance flag code on $\bbF_q^n$ with the spread $\cS$  as a projected code. Its cardinality, which is $L=\frac{q^n-1}{q^k-1}$, is maximum for optimum distance flag codes of this type.
\end{theorem}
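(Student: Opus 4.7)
The plan is to leverage all the structural work already done on the full admissible code $\cC$ from the previous subsection, so the proof reduces to a bookkeeping argument. First I would observe that, since $(t_1,\dots,t_r)$ is admissible, each $t_j$ lies in $\{1,\dots,k,n-k,\dots,n-1\}$, and so each subspace $\cW_i^{(t_j)}$ appearing in (\ref{def: punctured flag}) is literally one of the subspaces making up the flag $\cF_{\mathrm{W}_i}\in\cC$. Consequently, the $j$-th projected code of $\cC_{(t_1,\dots,t_r)}$ equals the projected code of $\cC$ corresponding to dimension $t_j$; that is, it is one of the codes $\cC_{j'}$ described in Proposition \ref{prop: optimum distance projected code}.

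Next I would use Proposition \ref{prop: optimum distance projected code} directly: every such projected code either is a partial spread (when $t_j\leq k$) or an equidistant $(2t_j-n)$-intersecting constant dimension code (when $t_j\geq n-k$), and in both cases it attains the maximum possible subspace distance for its dimension in $\bbF_q^n$. Moreover, each of these projected codes has cardinality exactly $L=\frac{q^n-1}{q^k-1}$. Since the projection to dimension $k$ (which occurs because $k$ is in the type vector by admissibility) is precisely the spread $\cS$, and the map $\cF_{\mathrm{W}_i}\mapsto(\cW_i^{(t_1)},\dots,\cW_i^{(t_r)})$ preserves the $k$-th component, the punctured code has cardinality $L$ and the $k$-projected code is still $\cS$.

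Now I would invoke Theorem \ref{theo:characterization optimum distance}: a flag code is optimum distance if and only if it is disjoint and each projected code attains the maximum subspace distance. Disjointness of $\cC_{(t_1,\dots,t_r)}$ is immediate, since any projected code already has $L=|\cC|$ elements and thus the projection $p_i$ for any $i$ is injective. Combined with the maximum-distance property of each projected code, this yields that $\cC_{(t_1,\dots,t_r)}$ is an optimum distance flag code of type $(t_1,\dots,t_r)$ with $\cS$ as a projected code.

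Finally, for the cardinality claim, I would apply Theorem \ref{theo: maximum cardinality iff spread}: since the admissible type vector contains $k$ and $\cC_{(t_1,\dots,t_r)}$ is an optimum distance flag code whose $k$-projection is the spread $\cS$, its size equals the upper bound $\frac{q^n-1}{q^k-1}$, which is the maximum for any optimum distance flag code of this type containing the dimension $k$. I do not anticipate a genuine obstacle here; the content of the proof is really that puncturing preserves exactly the two ingredients needed (disjointness via the preserved $k$-th projection equal to $\cS$, and maximum-distance of each surviving projected code), and the hardest check — namely, that each surviving projected code still attains the maximum subspace distance — has already been established in Proposition \ref{prop: optimum distance projected code}.
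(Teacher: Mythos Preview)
Your proposal is correct and follows exactly the approach indicated by the paper: the paper observes in one sentence that the projected codes of $\cC_{(t_1,\dots,t_r)}$ are among the projected codes of $\cC$, and then invokes Theorem~\ref{theo: maximum cardinality iff spread}; you simply unpack this by citing Proposition~\ref{prop: optimum distance projected code} for the maximum-distance and cardinality properties of each surviving projected code and Theorem~\ref{theo:characterization optimum distance} for the optimum-distance conclusion. Nothing is missing and no step differs from the paper's intended argument.
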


\subsection{Example}\label{subsec:example}
We construct an example of optimum distance flag code of type $(2,4)$ on $\bbF_2^6$ from a $2$-spread. To do this, we follow the steps given in Section \ref{sec:our construction}.

Consider the bipartite  graph $G=(V, E)$ where $V=\cG_4(1,3) \cup \cG_4(2,3)$ and $E$ is the set of pairs $(l,H) \in \cG_4(1,3)\times \cG_4(2,3)$ with $l\subset H$. 
Take $\alpha \in \bbF_4$ with $\alpha\neq 0,1$. Then, we have that $\bbF_4=\{0,1,\alpha, \alpha^2\}$.  By using the package GRAPE of GAP, and following the process described in \cite{Diestel2005} to get perfect matchings, we have designed an algorithm that provides a perfect matching of $V$. The induced bijection $M : \cG_4(1,3)  \rightarrow \cG_4(2,3)$ is explicitly given by:
\begin{multicols}{2}\small{$$
\begin{array}{ccc}
M(\left\langle (0,0,1) \right\rangle) & = & \mathrm{rowsp}
\begin{pmatrix}
0 & 0  & 1  \\
0 & 1  & 0
\end{pmatrix} \qquad \\
M(\left\langle (0,1,0) \right\rangle) & = & \mathrm{rowsp}
\begin{pmatrix}
 0 & 1  & 0  \\
 1 & 0  & \alpha^2
\end{pmatrix} \\
M(\left\langle (0,1,1 ) \right\rangle) & = & \mathrm{rowsp}
\begin{pmatrix}
 0 & 1  &  1 \\
 1 & 0  & 1
\end{pmatrix}\\
M(\left\langle (0,1,\alpha ) \right\rangle) & = & \mathrm{rowsp}
\begin{pmatrix}
  0 &  1 & \alpha  \\
  1 &  0 & \alpha
\end{pmatrix} \\
M(\left\langle ( 0,1,\alpha^2 ) \right\rangle) & = & \mathrm{rowsp}
\begin{pmatrix}
 0 & 1  &  \alpha^2 \\
 1 & 0  &  1
\end{pmatrix} \\
M(\left\langle ( 1,0,0 ) \right\rangle) & = & \mathrm{rowsp}
\begin{pmatrix}
 1 & 0 & 0   \\
 0 & 1  & \alpha^2  
\end{pmatrix}\\
M(\left\langle ( 1,0,1 ) \right\rangle) & = & \mathrm{rowsp}
\begin{pmatrix}
1 & 0 & 1  \\
0 & 0 & 1 
\end{pmatrix}\\
M(\left\langle ( 1,0,\alpha ) \right\rangle) & = & \mathrm{rowsp}
\begin{pmatrix}
 1 &  0 &  \alpha  \\
 0 &  1 &  1 
\end{pmatrix}\\
M(\left\langle ( 1,0,\alpha^2 ) \right\rangle) & = & \mathrm{rowsp}
\begin{pmatrix}
1 & 0 & \alpha^2   \\
0 & 1 &  \alpha 
\end{pmatrix}\\
M(\left\langle ( 1,1,0 ) \right\rangle) & = & \mathrm{rowsp}
\begin{pmatrix}
 1 &  1 &  0  \\
 0 &  1 &  \alpha^2 
\end{pmatrix}\\
M(\left\langle ( 1,1,1 ) \right\rangle) & = & \mathrm{rowsp}
\begin{pmatrix}
 1 & 1 & 1   \\
 0 & 1 & \alpha^2
\end{pmatrix}
\end{array}
$$

\columnbreak

$$
\begin{array}{ccc}
M(\left\langle ( 1,1,\alpha ) \right\rangle) & = & \mathrm{rowsp}
\begin{pmatrix}
 1 & 1  &  \alpha  \\
 0 & 1  &  \alpha
\end{pmatrix}\\
M(\left\langle ( 1,1,\alpha^2 ) \right\rangle) & = & \mathrm{rowsp}
\begin{pmatrix}
 1 &  1 & \alpha^2   \\
 0 &  0 &  1
\end{pmatrix}\\
M(\left\langle ( 1,\alpha,0 ) \right\rangle) & = & \mathrm{rowsp}
\begin{pmatrix}
 1 & \alpha &  0  \\
 0 & 1      &  0  
\end{pmatrix}\\
M(\left\langle ( 1, \alpha, 1 ) \right\rangle) & = & \mathrm{rowsp}
\begin{pmatrix}
1 & \alpha & 1   \\
0 & 1      & 1 
\end{pmatrix}\\
M(\left\langle ( 1, \alpha, \alpha ) \right\rangle) & = & \mathrm{rowsp}
\begin{pmatrix}
 1 & \alpha & \alpha   \\
 0 & 1      & 0 
\end{pmatrix}\\
M(\left\langle ( 1, \alpha, \alpha^2 ) \right\rangle) & = & \mathrm{rowsp}
\begin{pmatrix}
 1 & \alpha & \alpha^2    \\
 0 &  0     &  1
\end{pmatrix}\\
M(\left\langle ( 1,\alpha^2, 0 ) \right\rangle) & = & \mathrm{rowsp}
\begin{pmatrix}
 1 & \alpha^2 & 0   \\
 0 &  1       & \alpha 
\end{pmatrix}\\
M(\left\langle ( 1, \alpha^2, 1 ) \right\rangle) & = & \mathrm{rowsp}
\begin{pmatrix}
 1 & \alpha^2 & 1   \\
 0 & 1        & 0 
\end{pmatrix}\\
M(\left\langle (1, \alpha^2, \alpha  ) \right\rangle) & = & \mathrm{rowsp}
\begin{pmatrix}
1 & \alpha^2 & \alpha   \\
0 & 0 & 1 
\end{pmatrix}\\
M(\left\langle ( 1, \alpha^2, \alpha^2 ) \right\rangle) & = & \mathrm{rowsp}
\begin{pmatrix}
 1 & \alpha^2 & \alpha^2   \\
 0 &  1       & 1 
\end{pmatrix}\\
& & 
\end{array}
$$}
\end{multicols}

Observe that every line $l\in \cG_4(1,3)$ is a subspace of the (hyper)plane $M(l)$. Even more, we have expressed every subspace $M(l)$ as the rowspace of a $2\times 3$ matrix whose first row is precisely a generator of the line $l$. In this way, we obtain the optimum distance flag code of type $(1,2)$ on $\mathbb{F}_4^3$
$$\widetilde{\cC}=\{(l, M(l))\mid l \in \cG_4(1,3)\}.$$

Now, let $f(x) \in\bbF_2[x]$ be the minimal polynomial of $\alpha$, which has degree $2$, and $P \in GL_2(2)$ its companion matrix. If $\phi$ is the field isomorphism  in (\ref{eq:field isomorphism}), we have that $\phi(0)={\0}_2$, $\phi(1)=I_2$ and $\phi(\alpha)=P$. Taking the previous matching $M$ and the field reduction $\varphi$ induced by $\phi$ (\ref{eq: field reduction on m-subspaces}), we define the following optimum distance flag code of type $(2,4)$ on $\mathbb{F}_2^6$

$$\widehat{\cC}=\{(\varphi(l), \varphi(M(l))) \mid  l \in \cG_4(1,3)\}.$$
If we take $l=\langle(0,1,\alpha)\rangle$, for instance, the corresponding element on $\widehat{\cC}$ is the flag
$$\cF=  \begin{array}{c}
 \left(\textrm{rowsp} \begin{pmatrix} {\0}_2 & I_2& P  \end{pmatrix}, \, \textrm{rowsp}
\begin{pmatrix}
 {\0}_2 & I_2& P \\
  I_2 &  {\0}_2 & P
\end{pmatrix}\right).
\end{array}$$
Note that $\widehat{\cC}_1=\cS(3,2,P)=\cS$. Also, for every  $\varphi(l) \in \cS$ with $ l \in \cG_4(1,3)$, we have that  $\varphi(M(l))$ is a $4$-dimensional subspace over $\bbF_2$ that contains $\varphi(l)$. Moreover,   $\varphi(M(l))$ is the vector space generated by the rows of a $4\times 6$ full-rank matrix, whose two first rows span $\varphi(l)$.
\section{Conclusions and future work}\label{sec: conclusions}

In this paper we have addressed the problem of obtaining  flag codes of general type $(t_1,\dots,t_r)$ on a space $\mathbb{F}_q^n$ with the maximum possible distance and the property of having a $k$-spread as a projected code whenever $k$ divides $n$. Firstly, we have showed that the existence of such codes might be not possible for an arbitrary type vector and  have characterized the admissible ones. They have to satisfy the condition: $k \in \{t_1,\dots, t_r\} \subseteq\{1,\dots, k,n-k,\dots, n-1\}$.

Given an admissible type vector, we have proved the existence of optimum distance flag codes of such a type with a spread as a projected code by describing a gradual construction starting from type $(1, n-1)$, following with type $(k, n-k)$, to finish with the full admissible type $\{1,\dots, k,n-k,\dots, n-1\}$. This construction is mainly based on two ideas: on one side, we exploit the existence of perfect matchings in the bipartite graph with set of vertices given by the lines and the hyperplanes of $\mathbb{F}_q^n$ and edges given by the containment relation. On the other hand, we use the properties of the field reduction map that allow us to translate the spread of lines to a $k$-spread and to build our code from it. Our construction provides codes with the best possible size among optimum distance flag codes of any arbitrary admissible type vector.

In current work we investigate the algebraic structure and features of this family of codes and explore other possible constructions. We also study the family of flag codes from spreads not necessarily having the maximum distance as well as the existence and performance of decoding algorithms for them.


\begin{thebibliography}{99}

\bibitem{AhlsCai00}
R.\ Ahlswede, N.\ Cai, R.\ Li and R.\ W.\ Yeung,  
\emph{Network Information Flow},
IEEE Transactions on Information Theory, Vol. 46 (2000) 1204-1216.

\bibitem{CasoPlanar}
C.\ Alonso-Gonz\'alez, M.\ A.\ Navarro-P\'erez and X.\ Soler-Escriv\`a, 
\emph{Flag Codes from Planar Spreads in Network Coding}, to appear in Finite Fields and Their Applications, \url{https://doi.org/10.1016/j.ffa.2020.101745}


\bibitem{Diestel2005}
R.\ Diestel,
\emph{Graph Theory}, third ed., Springer-Verlag Heidelberg, New York, 2005.

\bibitem{EtzRa15}
T.\  Etzion and N.\  Raviv,
\emph{Equidistant codes in the Grassmannian},
Discrete Applied Mathematics, Vol. 186 (2015) 87-97.

\bibitem{GoKapKhan13}
A.\ Goel, M.\ Kapralov and S.\ Khanna,
\emph{Perfect Matchings in $O(n\log n)$ Time in Regular Bipartite Graphs}
SIAM Journal on Computing, Vol. 42(3) (2013) 1392-1404.

\bibitem{GoManRo12}
E.\ Gorla, F.\ Manganiello and J.\ Rosenthal,
\emph{An Algebraic Approach for Decoding Spread Codes},
Advances in Mathematics of Communications, Vol. 6 (4) (2012) 443-466.

\bibitem{GoRava14} 
E.\ Gorla and A.\ Ravagnani,
\emph{Partial spreads in random network coding}, 
Finite Fields and Their Applications, Vol. 26 (2014) 104-115.

\bibitem{GoRava15} 
E.\ Gorla and A.\ Ravagnani,
\emph{Equidistant Subspace Codes}, 
Linear Algebra and its Applications, Vol. 490 (2016)  48-65.

\bibitem{Hall1935}
P.\ Hall,
\emph{On representatives of subsets},
Journal of the London Mathematical Society, Vol. 10 (1935) 26-30.

\bibitem{HopKarp73}
J.\ E.\ Hopcroft and R.\ M.\ Karp.
{\em An $n^{5/2}$ Algorithm for Maximum Matchings in Bipartite Graphs},
SIAM Journal on Computing, Vol. 2(4) (1973) 225–231.

\bibitem{KoetKschi08}
R.\ Koetter and F.\ Kschischang,  
\emph{Coding for Errors and Erasures in Random Network Coding},
IEEE Transactions on Information Theory, Vol. 54 (2008) 3579-3591.

\bibitem{Kurz20}
S.\ Kurz, 
\emph{Bounds for Flag Codes}, \url{https://arxiv.org/abs/2005.04768} (preprint).

\bibitem{LavVoorde15}
M.\ Lavrauw and G.\ Van de Voorde, \emph{Field reduction and linear sets in finite geometry}, in:
Contemp. Math. 632 Amer. Math. Soc., Providence, RI, 2015, pp. 271-293.

\bibitem{LidNiede94}
R.\ Lidl and H.\ Niederreiter 
\emph{Introduction to Finite Fields and their Applications},
Cambridge University Press, Cambridge, London, 1994.

\bibitem{LiebNebeVaz18}
D.\ Liebhold, G.\ Nebe and A.\ Vazquez-Castro,  
\emph{Network Coding with Flags},
Designs, Codes and Cryptography, Vol. 86 (2) (2018) 269-284.

\bibitem{MangaGorlaRosen08}
F.\ Manganiello, E.\ Gorla and J.\ Rosenthal,  
\emph{Spread Codes and Spread Decoding in Network Coding}, 
in: IEEE International Symposium on Information Theory, Proceedings (ISIT), Toronto, Canada, 2008, pp. 851-855.

\bibitem{MangaTraut14}
F.\ Manganiello and A.-L.\ Trautmann,  
\emph{Spread decoding in extension fields}, 
Finite Fields and Their Applications, Vol. 25 (2014) 94-105.

\bibitem{NobUcho09}
R.\ W.\ N\'obrega and B.\ F.\ Uch\^oa-Filho,  
\emph{Multishot Codes for Network Coding: Bounds and a Multilevel Construction}, in: 2009 IEEE International Symposium on Information Theory, Proceedings (ISIT), Seoul, South Korea, 2009, pp. 428-432.

\bibitem{NobUcho10}
R.\ W.\ N\'obrega and B.\ F.\ Uch\^oa-Filho,  
\emph{Multishot Codes for Network Coding using Rank-Metric Codes},
in: 2010 Third IEEE International Workshop on Wireless Network Coding, Boston, USA, 2010, pp. 1-6.


\bibitem{Segre64}
B.\ Segre,
\emph{Teoria di Galois, Fibrazioni Proiettive e Geometrie non Desarguesiane},
Annali di Matematica Pura ed Applicata, Vol. 64 (1964) 1–76.



\bibitem{TrautRosen18}
A.-L.\ Trautmann and J.\ Rosenthal,
{\em Constructions of Constant Dimension Codes},
in: M. Greferath et al. (Eds.), Network Coding and Subspace Designs, E-Springer International Publishing AG, 2018, pp. 25-42.


\end{thebibliography}
\end{document}